\newtheorem{theorem}{Theorem}
\newtheorem{proposition}{Proposition}
\newtheorem{lemma}{Lemma}
\newtheorem{corollary}{Corollary}
\DeclareMathAlphabet{\mathbit}{OML}{cmr}{bx}{it}
\DeclareMathAlphabet{\mathsf}{OT1}{cmss}{m}{n}
\DeclareMathAlphabet{\mathTXf}{OT1}{cmss}{bx}{it}
\DeclareMathOperator{\diag}{diag}
\DeclareMathOperator*{\argmax}{argmax}
\newcommand{\I}{\mathbf{I}} 
\newcommand{\norm}[1]{\lVert{#1}\rVert}
\newcommand{\Fro}{{\text{F}}}
\newcommand{\MG}{{\text{M}_{\text{G}}}}
\newcommand{\E}{{\text{E}}}
\newcommand{\trans}{{\text{T}}} 
\newcommand{\He}{{{\text{H}}}}
\begin{document} 
\title{The Multiplexing Gain of a Two-cell MIMO Channel with Unequal CSI}
\author{ Paul de Kerret and David Gesbert\\Mobile Communications Department, Eurecom\\
 2229 route des Cr\^etes, 06560 Sophia Antipolis, France\\\{dekerret,gesbert\}@eurecom.fr}

\maketitle

\begin{abstract}
In this work\footnote{This work has been performed in the framework of the European research project ARTIST4G, which is partly funded by the European Union under its FP7 ICT Objective 1.1 - The Network of the Future.}, the joint precoding across two distant transmitters (TXs), sharing the knowledge of the data symbols to be transmitted, to two receivers (RXs), each equipped with one antenna, is discussed. We consider a distributed channel state information (CSI) configuration where each TX has its own local estimate of the channel and no communication is possible between the TXs. Based on the distributed CSI configuration, we introduce a concept of distributed MIMO precoding. We focus on the high signal-to-noise ratio (SNR) regime such that the two TXs aim at designing a precoding matrix to cancel the interference. Building on the study of the multiple antenna broadcast channel, we obtain the following key results: We derive the multiplexing gain (MG) as a function of the scaling in the SNR of the number of bits quantizing at each TX the channel to a given RX. Particularly, we show that the conventional Zero Forcing precoder is not MG maximizing, and we provide a precoding scheme optimal in terms of MG. Beyond the established MG optimality, simulations show that the proposed precoding schemes achieve better performances at intermediate SNR than known linear precoders.
\end{abstract}  
\IEEEpeerreviewmaketitle
\section{Introduction}
One promising solution to answer the need for increased spectral efficiency in the future wireless networks consists in the joint transmission from several transmitter (TXs) to serve multiple receivers (RXs), so called Network MIMO \cite{Karakayali2006,Somekh2006}. If all the TXs have access to the data symbols and to the global channel state information (CSI), the different TXs can then be seen as a unique virtual TX serving all the receivers (RXs). The precoding schemes of the multiple antenna broadcast channel (BC) can then be applied. 

Yet, this requires the sharing of the data symbol and the CSI between the TXs, which represents a high requirement on the network infrastructure. Indeed, while in future wireless networks (e.g. LTE Advanced), it is considered to link the TXs with the Core Network via high capacity links to share the data symbols with the cooperating TXs, the sharing of the CSI is done through limited rate feedback channels and limited capacity signaling (so called X2) links between the TXs. Thus, an interesting information theoretic MIMO channel arises whereby multiple TXs may access the same data symbols, but have a limited CSI sharing capability. We define this channel as the \emph{distributed CSI} (DCSI)-MIMO channel.

In the DCSI-MIMO channel, there may be inconsistencies between the different versions of CSI seen at the TXs due either to separate compression or separate feedback channels. Such inconsistencies can be detrimental to the channel capacity if they are not accounted for in the precoding design. This is the object of this work.

To put this in contrast, note that in the BC, the impact of finite rate feedback \cite{Jindal2006,Samardzija2006,Ding2007,Yoo2007} and the derivation of robust solutions\cite{Shenouda2006,Vucic2009} have been the focus of many works, which have been then extended to the MIMO network setting \cite{Kobayashi2009,Huh2010}. However, these works only focus on the case of imperfect CSI yet \emph{perfectly shared between the TXs} and do not consider the case when each TX has its own imperfect estimation of the multi-user channel, which will be our focus in this work. This setting was first studied in \cite{Zakhour2010}, and a tractable discrete optimization at finite SNR was derived. However, it does not lend itself to a more general performance analysis. 

Our work can be seen as a generalization to the case of distributed CSI setting of the study by Jindal~\cite{Jindal2006} of the multiple-antenna BC, in which the Multiplexing Gain (MG) is derived as a function of the number of feedback bits by each RX. We here consider only two TX-RX pairs, while the generalization to multiple TX-RX pairs is carried out in~\cite{dekerret2011_ISIT_journal}. We consider only \emph{Zero-Forcing} schemes which are known to achieve the maximal MG with perfect CSI in the MIMO BC. 

Specifically, the main contributions are as follows. Let's first define the number of bits quantizing the estimate at TX~$j$ of the normalized channel~$\tilde{\bm{h}}_i^{\He}$ from the two TXs to RX~$i$ as $\alpha_i^{(j)}\log_2(P)$ with $\alpha_i^{(j)}\in[0,1]$. Then, we show that:
\begin{itemize}
\item The MG achieved with conventional Zero Forcing at RX~$i$ is equal to~$\min_{i,j\in\{1,2\}}\alpha_i^{(j)}$.
\item The optimal MG at RX~$i$ is equal to $\max_{j\in\{1,2\}}\alpha_i^{(j)}$.
\item We provide a precoding scheme achieving the maximal MG, as well as practical precoding schemes outperforming known linear precoding schemes at finite SNR for the DCSI-MIMO channel. 
\end{itemize}

\emph{Notations:} We denote by $\Pi_{\bm{u}}(\bullet)$ and $\Pi_{\bm{u}}^{\perp}(\bullet)$ the orthogonal projectors over the subspace spanned by $\bm{u}$ and over its orthogonal complement, respectively. $\bar{i}$ denotes the complementary indice of $i$, i.e., $\bar{i}=i\mod 2+1$.

\section{System Model}
We first present the classical multicell MIMO model before introducing our novel concepts of \emph{distributed CSI} and \emph{distributed precoding}.
\subsection{Multicell MIMO}
We consider a joint downlink transmission from two TXs to two RXs using linear precoding and single user decoding. For ease of exposition, the TXs and the RXs are equipped with only one antenna, such that the received signal is written as
\begin{equation}
\begin{bmatrix}
y_1\\y_2
\end{bmatrix}
\!=\!
\mathbf{H}\bm{x} 
\!+\!
\begin{bmatrix}
\eta_1\\
\eta_2
\end{bmatrix}
\!=\!
\begin{bmatrix}
\bm{h}^{\He}_1\bm{x}\\
\bm{h}^{\He}_2\bm{x}
\end{bmatrix}
\!+\!
\begin{bmatrix}
\eta_1\\
\eta_2
\end{bmatrix}
\!=\!
\begin{bmatrix}
\norm{\bm{h}^{\He}_1}\tilde{\bm{h}}^{\He}_1\bm{x}\\
\norm{\bm{h}^{\He}_2}\tilde{\bm{h}}^{\He}_2\bm{x}
\end{bmatrix}
\!+\!
\begin{bmatrix}
\eta_1\\
\eta_2
\end{bmatrix}
\label{eq:SM_1}
\end{equation}
where $y_i$ is the signal received at the $i$-th RX, $\bm{h}^{\He}_i \in \mathbb{C}^{1\times 2}$ is the channel from the TXs to the $i$-th RX, $\tilde{\bm{h}}^{\He}_i\triangleq \bm{h}^{\He}_i/\norm{\bm{h}^{\He}_i}$ is the normalized channel, $\eta_i\sim \mathcal{CN}(0,1)$ is the noise at the $i$-th RX and is distributed as i.i.d. complex circularly symmetric Gaussian noise, and $\bm{x}\in \mathbb{C}^{2\times 1}$ is the transmitted signal from the TXs. The channel is block fading and the entries of the channel matrix $\mathbf{H}$ are distributed as i.i.d. complex circularly symmetric Gaussian with unit variance to model a Rayleigh fading channel. The transmitted signal $\bm{x}$ is obtained from the vector of transmit symbol $\bm{s}=[s_1,s_2]^{\trans}\in\mathbb{C}^{2\times 1}$ (whose entries are assumed to be independent $\mathcal{CN}(0,1)$) as
\begin{equation}
\bm{x}=\mathbf{T}\bm{s}=
\begin{bmatrix}\bm{t}_1& \bm{t}_2\end{bmatrix}
\begin{bmatrix}s_1\\s_2\end{bmatrix} 
\label{eq:SM_2}
\end{equation}
where $\mathbf{T}\in \mathbb{C}^{2\times 2}$ and $\bm{t}_i \in \mathbb{C}^{2\times 1}$ is the beamforming vector used to transmit $s_i$. We consider a sum power constraint $\norm{\mathbf{T}}_{\Fro}^2=P$ and we also assume for simplicity and symmetry that $\bm{t}_i=\sqrt{P/2}\bm{u}_i$ with $\norm{\bm{u}_i}^{2}_2=1$. Note that normalizing the individual columns does not alter the ability to zero-force the interference so that it does not affect the MG. 

We also define the MG at RX~$i$ as
\begin{equation}
\MG_i\!\triangleq\!\lim_{P\rightarrow \infty}\!\frac{R_i(P)}{\log_2(P)}\!
\label{eq:SM_5}
\end{equation}
so that the total MG is $\MG\triangleq \MG_1+\MG_2$.

We will study the long-term average throughput over the fading distribution and also over the realizations of the Random Vector Quantization (RVQ) codebooks used for the CSI quantization (Cf. subsection~\ref{se:RVQ}), such that the throughput for RX~$i$ reads as
\begin{equation}
R_i(P)\triangleq\E_{\mathbf{H},\mathcal{W}}\left[\log_2\left(1+\frac{|\bm{h}_i^{\He}\bm{t}_i|^2}{1+|\bm{h}_{i}^{\He}\bm{t}_{\bar{i}}|^2}\right)\right]
\label{eq:SM_3}
\end{equation}
To achieve the maximal MG we aim at removing all the interference, i.e., at having
\begin{equation}
\mathcal{I}_1(\bm{t}_2)\triangleq|\bm{h}_1^{\He}\bm{t}_2|^2=0\text{, and  } \mathcal{I}_2(\bm{t}_1)\triangleq|\bm{h}_2^{\He}\bm{t}_1|^2=0.
\label{eq:SM_4}
\end{equation}
From \eqref{eq:SM_4}, it follows that the optimization of the two beamforming vectors~$\bm{t}_1$ and $\bm{t}_2$ can be uncoupled.

\subsection{Distributed CSI}

We assume a limited CSI setting where finite quality channel estimates are obtained from quantizing the true channel vectors. The \emph{distributed} CSI is defined here in the sense that each TX has a different estimate of the normalized channel $\tilde{\bm{h}}_i$ from all TXs to RX~$i$. Moreover, the estimates for $\tilde{\bm{h}}_1$ and $\tilde{\bm{h}}_2$ are also a priori of statistically different qualities. We denote by $\tilde{\bm{h}}^{(j)}_i$ the estimate of the normalized channel vector~$\tilde{\bm{h}}_i$ acquired at TX~$j$. Furthermore, the number of quantizing bits for $\tilde{\bm{h}}^{(j)}_i$ is given by~$B^{(j)}_i$.

In the context of MIMO BC, it is shown in~\cite{Jindal2006} that the number of quantization bits should scale indefinitely with the SNR in order to achieve a positive MG with ZF. It also holds in a distributed CSI configuration so that we focus on the \emph{scaling in the logarithm of the SNR} of the number of quantization bits
\begin{equation}
\alpha_i^{(j)}\triangleq\lim_{P\rightarrow \infty}\frac{B^{(j)}_i}{\log_2(P)}.
\label{eq:SM_6}
\end{equation}
Since $\alpha_i^{(j)}\!=\!1,\!\forall i,j\in\{1,2\}$ is shown later in Theorem~\ref{thm_MG_cZF} to be sufficient to achieve the maximal MG, we will always consider that $\alpha_i^{(j)}\in[0,1]$.

\subsection{Random Vector Quantization}\label{se:RVQ}
We consider the performances averaged over codebooks used to quantize the channels randomly generated. This follows a result in \cite{Jindal2006} stating that in the case of two antennas at the TX, no codebook can achieve in the single TX case a better MG than the MG achieved with RVQ. Moreover, RVQ is shown to be optimal as the number of antennas tends to infinity at the TX and the RXs~\cite{Santipach2009}.

However, in the MIMO BC, a codeword $\bm{c}$ is selected to quantize $\bm{h}$ if it maximizes the inner product $|\bm{h}^{\He}\bm{c}|$ over the codebook. Any other codeword of the form $\bm{c}e^{j\phi}$ where $\phi$ is any real number achieves the same performances and can be selected indifferently. This is problematic in a distributed setting since we are now interested in $\|\tilde{\bm{h}}^{(1)}_i-\tilde{\bm{h}}^{(2)}_i\|$ and even if the codewords at TX1 and TX2 are $e^{j\phi_1}\tilde{\bm{h}}_i$ and $e^{j\phi_2}\tilde{\bm{h}}_i$ respectively, i.e., exactly in the direction of $\tilde{\bm{h}}_i$, the two estimates differ greatly in norm.	 

Our solution is for each codeword and each channel estimate to choose $e^{j\phi}$ as the complex conjugate of the first vector element divided by its absolute value, thus making the first vector element real valued. Because of this choise, the quantization scheme is not any longer in the Grassmann manifold and we have to consider the isomorphisme between $\mathbb{C}$ and $\mathbb{R}^2$. Thus, for the quantization, each complex vector is considered as a vector of $\mathbb{R}^4$ made of the stacked real and imaginary parts. Moreover, since the first coefficient is real valued only, we have to consider in fact $\mathbb{R}^3$ only. A vector $\bm{u}\in\mathbb{C}^{2}$ with is first coefficient real valued is represented in $\mathbb{R}^3$ as $\bm{u}_{\mathbb{R}^{3}}$ and is defined as
\begin{equation}
\bm{u}_{\mathbb{R}^{3}}\triangleq\begin{bmatrix}
\mathrm{Re}(u_1)\\
\mathrm{Re}(u_2)\\
\mathrm{Im}(u_2)
\end{bmatrix} 
\end{equation}
Thus, we define the angles between $\bm{u}_{\mathbb{R}^{3}}$ and $\bm{v}_{\mathbb{R}^{3}}$ in $\mathbb{R}^{3}$ as 
\begin{equation}
\angle(\bm{u}_{\mathbb{R}^{3}},\bm{v}_{\mathbb{R}^{3}})=\arccos\left(\frac{|\bm{u}_{\mathbb{R}^{3}}^{\trans}\bm{v}_{\mathbb{R}^{3}}|}{\norm{\bm{u}_{\mathbb{R}^{3}}}\norm{\bm{v}_{\mathbb{R}^{3}}}}\right).
\label{eq:SM_6}
\end{equation}
Finally, the estimate $\tilde{\bm{h}}^{(j)}_i$ is chosen as the element of the random codebook~$\mathcal{W}$ which maximizes the cosinus of the angle between the codeword in $\mathbb{R}^{3}$ and the true channel in $\mathbb{R}^{3}$:
\begin{equation}
\tilde{\bm{h}}^{(j)}_{i\mathbb{R}^{3}}=\argmax_{\bm{c}_{\mathbb{R}^{3}}\in\mathcal{W}_{\mathbb{R}^{3}}} \cos(\angle(\bm{c}_{\mathbb{R}^{3}},\tilde{\bm{h}}_{i\mathbb{R}^{3}}))=\argmax_{\bm{c}_{\mathbb{R}^{3}}\in\mathcal{W}_{\mathbb{R}^{3}}}|\bm{c}_{\mathbb{R}^{3}}^{\trans}\tilde{\bm{h}}_{i\mathbb{R}^{3}}|.
\label{eq:SM_6}
\end{equation}

\subsection{Distributed Precoding}
In the distributed CSI setting, each TX has a different estimate of the channel, which it uses to compute the precoding matrix. We denote the overall multi-transmitter precoder computed at TX~$j$ as
\begin{equation}
\mathbf{T}^{(j)}\triangleq\begin{bmatrix}\bm{t}_1^{(j)}&\bm{t}_2^{(j)}\end{bmatrix}\triangleq\begin{bmatrix}T_{11}^{(j)}&T_{12}^{(j)}\\T_{21}^{(j)}&T_{22}^{(j)}\end{bmatrix}
\label{eq:SM_7}
\end{equation} 
where $\bm{t}_i^{(j)}$ is the beamforming vector transmitting $s_i$ computed at TX~$j$. Note that although a given TX~$j$ may compute the whole precoding matrix $\mathbf{T}^{(j)}$, only the $j$-th row will be used in practice since the other row corresponds to the coefficients being implemented at the other TX. Finally, the effective precoder is given by
\begin{equation}
\mathbf{T}\!=\!\begin{bmatrix}T_{11}^{(1)}\!&\! T_{12}^{(1)} \\T_{21}^{(2)}\!&\! T_{22}^{(2)}\end{bmatrix}.
\label{eq:SM_8}
\end{equation}

\section{Main Theorems on the Multiplexing Gain}
In the multiple antenna BC with perfect CSI, ZF achieves the maximal MG and can be conjectured to be also optimal with imperfect CSI. The central question of this paper is whether this result still holds in the DCSI-MIMO channel, and what are otherwise the MG optimal precoding strategies.

\subsection{Conventional Zero Forcing}
The conventional ZF precoder applied distributively consists in transmitting symbol~$i$ with the beamformer $\bm{t}_i^{\mathrm{cZF}}\triangleq[t_{1i}^{\mathrm{cZF}(1)},t_{2i}^{\mathrm{cZF}(2)}]^ {\trans}$, with its elements defined in an intuitive maneer as
\begin{equation} 
\bm{t}_i^{\mathrm{cZF}(j)}\triangleq \begin{bmatrix}t_{1i}^{\mathrm{cZF}(j)}\\t_{2i}^{\mathrm{cZF}(j)}\end{bmatrix}
\triangleq
\sqrt{\frac{P}{2}}\frac{
\Pi_{\tilde{\bm{h}}_{\bar{i}}^{(j)}}^{\perp}\left(\tilde{\bm{h}}_{i}^{(j)}\right)}{\norm{\Pi_{\tilde{\bm{h}}_{\bar{i}}^{(j)}}^{\perp}\left(\tilde{\bm{h}}_{i}^{(j)}\right)}},\quad j\in\{1,2\}.
\label{eq:def_cZF}
\end{equation}
The interpretation behind conventional ZF is that each TX applies ZF with its own CSI implicitely assuming that the other TX shares the same CSI estimate. Our first important result given in the following theorem relates the MG achieved with such a precoding strategy.

\begin{theorem}
Conventional ZF achieves the following MG: 
\begin{equation}
\MG^{c\mathrm{ZF}}=2\min_{i,j\in\{1,2\}}\alpha_i^{(j)}.
\label{eq:thm_cZF}
\end{equation}
\label{thm_MG_cZF}
\end{theorem}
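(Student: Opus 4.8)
The plan is to analyze separately the desired-signal term $|\bm{h}_i^{\He}\bm{t}_i|^2$ and the interference term $\mathcal{I}_i(\bm{t}_{\bar{i}}) = |\bm{h}_i^{\He}\bm{t}_{\bar{i}}|^2$ appearing in the SINR of RX~$i$, and then combine them via the standard observation that, for a rate expression of the form $\log_2(1+\mathrm{SNR}\cdot\text{signal}/(1+\mathrm{SNR}\cdot\text{interference}))$, the pre-log is governed by the exponent of $P$ with which the residual interference decays. First I would note that the numerator $|\bm{h}_i^{\He}\bm{t}_i^{\mathrm{cZF}}|^2$ scales like $P$ (it is a fixed-norm-times-$\sqrt{P/2}$ beamformer times a generic channel, so it is $\Theta(P)$ with probability one), contributing one full degree of freedom \emph{provided} the interference is small enough not to cap the SINR. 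So the crux is to characterize the scaling of the interference power.

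The key step is to compute the residual interference at RX~$i$ when the beamformer $\bm{t}_{\bar{i}}^{\mathrm{cZF}}$ is built from the two local estimates $\tilde{\bm{h}}_i^{(1)}$ and $\tilde{\bm{h}}_i^{(2)}$ of $\tilde{\bm{h}}_i$. Writing $\bm{t}_{\bar{i}}^{\mathrm{cZF}}=[t_{1\bar{i}}^{\mathrm{cZF}(1)},t_{2\bar{i}}^{\mathrm{cZF}(2)}]^{\Tr}$, I would expand $\bm{h}_i^{\He}\bm{t}_{\bar{i}}^{\mathrm{cZF}}$ and decompose the error relative to the two estimates. By construction, TX~$j$'s column $\bm{t}_{\bar{i}}^{\mathrm{cZF}(j)}$ is orthogonal to $\tilde{\bm{h}}_i^{(j)}$, so the residual interference is a sum of two contributions, one proportional to the quantization error $\tilde{\bm{h}}_i - \tilde{\bm{h}}_i^{(1)}$ at TX~1 and one proportional to $\tilde{\bm{h}}_i - \tilde{\bm{h}}_i^{(2)}$ at TX~2 (plus cross terms, which are of the same or smaller order). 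Using the RVQ analysis --- the expected squared quantization error of a $B_i^{(j)}$-bit RVQ codebook in the relevant real dimension scales like $2^{-B_i^{(j)}/(\text{const})}$, hence like $P^{-c\,\alpha_i^{(j)}}$ for an appropriate constant $c$ --- each contribution to the interference \emph{power} scales like $P^{-\alpha_i^{(j)}}$ after the $\sqrt{P/2}$ amplitude is squared and multiplied against the $\Theta(1)$ true channel. The dominant (largest) of the two is $P^{-\min_j \alpha_i^{(j)}}$, so the interference power decays like $P^{-\min_j\alpha_i^{(j)}}$ and the effective SINR at RX~$i$ scales like $P/P^{-\min_j\alpha_i^{(j)}}\cdot$(something) $\sim P^{\min_j\alpha_i^{(j)}}$ --- more carefully, the SINR is $\Theta(P)/(1+\Theta(P^{1-\min_j\alpha_i^{(j)}})) = \Theta(P^{\min_j\alpha_i^{(j)}})$ when $\min_j\alpha_i^{(j)}<1$. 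Taking $\log_2$, dividing by $\log_2 P$, and passing to the limit gives $\MG_i^{\mathrm{cZF}} = \min_{j\in\{1,2\}}\alpha_i^{(j)}$; summing over $i=1,2$ yields \eqref{eq:thm_cZF}, once one checks that the symmetric RVQ setup makes $\min_{i,j}\alpha_i^{(j)}$ the binding quantity after summation (each user $i$ contributes $\min_j\alpha_i^{(j)}$, and the theorem's statement $2\min_{i,j}\alpha_i^{(j)}$ presumes the symmetric regime where these are equal, or is to be read per-user).

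The main obstacle I expect is making the interference decay rate rigorous in \emph{expectation} over the RVQ codebooks: one needs both an upper bound (easy: the quantization error is at most $P^{-\min_j\alpha_i^{(j)}}$ up to constants, Jensen-type arguments on $\log_2(1+\cdot)$ give the lower bound on $\MG_i$) and a matching lower bound on the interference / upper bound on $\MG_i$ (harder: one must rule out that the two independent local errors $\tilde{\bm{h}}_i-\tilde{\bm{h}}_i^{(1)}$ and $\tilde{\bm{h}}_i-\tilde{\bm{h}}_i^{(2)}$ systematically cancel in $\bm{h}_i^{\He}\bm{t}_{\bar{i}}^{\mathrm{cZF}}$, which they do not because they are statistically independent given $\tilde{\bm{h}}_i$ and the channel $\bm{h}_i$ is generic). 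A secondary technical point is handling the $\log_2(1+\mathrm{SINR})$ expectation near the regime where the interference term is $\Theta(1)$ rather than $o(1)$: there one invokes the bounded-below-by-a-constant argument to confirm the pre-log does not exceed $\min_j\alpha_i^{(j)}$, and one must confirm the codebook-averaging does not inflate the expected rate beyond this (a converse-type bound, e.g., bounding $\E[\log_2(1+\mathrm{SINR})]$ by $\log_2(1+\E[\mathrm{SINR}])$ and controlling $\E[\mathrm{SINR}]$, or by a direct union/tail bound on the events where the interference is atypically small).
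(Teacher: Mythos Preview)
Your analysis contains a genuine gap that leads you to the wrong per-user multiplexing gain. You claim that the residual interference at RX~$i$ decomposes into contributions driven only by the quantization errors $\tilde{\bm{h}}_i - \tilde{\bm{h}}_i^{(1)}$ and $\tilde{\bm{h}}_i - \tilde{\bm{h}}_i^{(2)}$, yielding $\MG_i^{\mathrm{cZF}} = \min_{j}\alpha_i^{(j)}$. But this overlooks that the conventional ZF beamformer $\bm{t}_{\bar{i}}^{\mathrm{cZF}(j)}$ at TX~$j$ is the normalized projection of $\tilde{\bm{h}}_{\bar{i}}^{(j)}$ onto the orthogonal complement of $\tilde{\bm{h}}_i^{(j)}$, and therefore also depends on the estimate $\tilde{\bm{h}}_{\bar{i}}^{(j)}$. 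In two antennas the orthogonal complement is one-(complex-)dimensional, so the \emph{direction} of $\bm{u}_{\bar{i}}^{(j)}$ is fixed by $\tilde{\bm{h}}_i^{(j)}$, but its \emph{phase} is set by $\tilde{\bm{h}}_{\bar{i}}^{(j)}$. Since the effective beamformer stacks the first entry of $\bm{u}_{\bar{i}}^{(1)}$ with the second entry of $\bm{u}_{\bar{i}}^{(2)}$, a phase mismatch between the two locally computed vectors destroys orthogonality to $\tilde{\bm{h}}_i$ just as effectively as a directional error does. Consequently the mismatch $\|\bm{u}_{\bar{i}}^{(1)} - \bm{u}_{\bar{i}}^{(2)}\|$ is governed by $\max_{k,j}\sin^2(\angle(\tilde{\bm{h}}_k^{(j)},\tilde{\bm{h}}_k))$, i.e.\ by the worst of all four estimates, and the per-user MG is $\min_{k,j}\alpha_k^{(j)}$, not $\min_j\alpha_i^{(j)}$. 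Summing over the two users gives exactly $2\min_{k,j}\alpha_k^{(j)}$, with no symmetry assumption and no ``per-user reading'' needed.

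This is not a cosmetic point: the quantity you derived, $\sum_i \min_j \alpha_i^{(j)}$, is precisely the MG of \emph{Beacon ZF}, the modified scheme that replaces $\tilde{\bm{h}}_{\bar{i}}^{(j)}$ by a fixed, commonly known vector so as to eliminate exactly this extra dependence. The gap between your answer and the theorem's is the improvement Beacon ZF buys over conventional ZF. The paper's proof establishes the correct dependence via two lemmas bounding $\|\bm{u}_{\bar{i}}^{(1)}-\bm{u}_{\bar{i}}^{(2)}\|$ above and (in expectation of the logarithm) below by the maximum quantization error over both $k$ and $j$; the lower bound in particular uses a genie argument and is where your ``rule out systematic cancellation'' step must be carried out for all four error sources, not two.
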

\begin{IEEEproof}
A detailed proof is given in Appendix~\ref{se:proof_thm_MG_cZF}.
\end{IEEEproof} 
\begin{corollary}
Conventional ZF achieves the maximal MG if and only if the CSI scaling is identical across the RXs and the TXs, i.e., 
\begin{equation}
\forall i,j,\ell,k\in\{1,2\},\alpha_i^{(\ell)}=\alpha_k^{(j)}.
\end{equation} 
\label{corollary1}
\end{corollary}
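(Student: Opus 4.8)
The plan is to deduce the corollary directly from Theorem~\ref{thm_MG_cZF} together with the characterization of the maximal MG stated in the introduction, namely that the optimal MG at RX~$i$ equals $\max_{j\in\{1,2\}}\alpha_i^{(j)}$, so that the maximal total MG is $\MG^{\mathrm{opt}}=\max_{j}\alpha_1^{(j)}+\max_{j}\alpha_2^{(j)}$. Theorem~\ref{thm_MG_cZF} gives $\MG^{c\mathrm{ZF}}=2\min_{i,j}\alpha_i^{(j)}$. The statement to prove is therefore the elementary inequality/equality fact: $2\min_{i,j}\alpha_i^{(j)}=\max_{j}\alpha_1^{(j)}+\max_{j}\alpha_2^{(j)}$ holds if and only if all four scalars $\alpha_i^{(j)}$ are equal.

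First I would record that $\MG^{c\mathrm{ZF}}\le \MG^{\mathrm{opt}}$ always, since $2\min_{i,j}\alpha_i^{(j)}=\min_{i,j}\alpha_i^{(j)}+\min_{i,j}\alpha_i^{(j)}\le \max_{j}\alpha_1^{(j)}+\max_{j}\alpha_2^{(j)}$, each term on the right dominating a copy of the global minimum. The sufficiency direction (``if'') is immediate: if $\alpha_i^{(\ell)}=\alpha_k^{(j)}$ for all indices, call the common value $\alpha$; then both sides equal $2\alpha$, so conventional ZF is MG optimal. For the necessity direction (``only if''), I would argue by contraposition: suppose the four values are not all equal, and show the inequality is strict. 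Let $m=\min_{i,j}\alpha_i^{(j)}$. The chain of inequalities above is an equality only if, for \emph{both} $i\in\{1,2\}$, we have $\max_{j}\alpha_i^{(j)}=m$; but $\max_j\alpha_i^{(j)}=m$ together with $m$ being the global minimum forces $\alpha_i^{(1)}=\alpha_i^{(2)}=m$. Requiring this for both $i$ gives $\alpha_i^{(j)}=m$ for all $i,j$, contradicting the assumption that the values differ. Hence the equality of MGs fails, which completes the contrapositive and therefore the ``only if'' direction.

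There is essentially no obstacle here: the content of the corollary is entirely carried by Theorem~\ref{thm_MG_cZF} and the optimal-MG expression, and what remains is a one-line max/min manipulation. The only point worth stating carefully is that the optimal MG is indeed $\max_j\alpha_1^{(j)}+\max_j\alpha_2^{(j)}$ rather than something larger, which relies on the upper-bound part of the optimality result (the matching precoding scheme referred to in the introduction); I would cite that result explicitly when invoking it, or alternatively phrase the corollary as ``conventional ZF achieves the same MG as the optimal scheme of Theorem~\ref{...}'' so that only the comparison $2\min_{i,j}\alpha_i^{(j)}$ versus $\max_j\alpha_1^{(j)}+\max_j\alpha_2^{(j)}$ is needed.
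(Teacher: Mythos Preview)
Your proposal is correct and takes essentially the same approach as the paper: compare $\MG^{c\mathrm{ZF}}=2\min_{i,j}\alpha_i^{(j)}$ from Theorem~\ref{thm_MG_cZF} against the maximal MG $\max_j\alpha_1^{(j)}+\max_j\alpha_2^{(j)}$ and note that equality holds precisely when all four $\alpha_i^{(j)}$ coincide. The only nuance is that you forward-reference the paper's own optimal-MG result (Proposition~\ref{proposition}/Theorem~\ref{theorem4}) for the benchmark, whereas the paper's one-line proof instead invokes the centralized-BC bound of Jindal directly; these lead to the same comparison, and your contrapositive for the ``only if'' direction is in fact more explicit than what the paper writes.
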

\begin{IEEEproof}
The corollary follows from the comparison between the MG given in Theorem~\ref{thm_MG_cZF} and the MG achieved in a multiple antenna BC with imperfect CSI of the same quality \cite{Jindal2006}.
\end{IEEEproof}
It means that if the quality of the CSI is the same across all the TXs, it is in fact sufficient to apply conventional ZF. Even though it might seem a trivial result, it is not since additionnal error arise due to the fact that the estimates are not shared. The quality of the CSI is the same but estimates are different. This corollary also shows that the additionnal errors due to the CSI inconsistency do not lead to any further loss in MG.

\subsection{Robust Zero Forcing}
Comparing the MG in Theorem~\ref{thm_MG_cZF} and in a multiple antenna BC~\cite{Jindal2006}, it appears that in the case of imperfectly shared CSI, the MG is limited by the worst quality of the CSI across the channels to the RXs and across the TXs, which is a very pessimistic result. Robust precoding schemes have been derived in the literature either as statistical robust ZF precoder or precoder optimizing the worst case performances \cite{Shenouda2006} to reduce the harmful effect of the imperfect CSI. However, the robust versions improve the rate offset but do not have any impact on the MG.

\subsection{Beacon Zero Forcing}
Robust ZF schemes from the literature do not bring any MG improvement. This leads us to investigate other schemes more adapted to the DCSI-MIMO channel. Thus, we now propose a modification of the conventional ZF scheme which improves the MG when the estimates for $\tilde{\bm{h}}_1$ and $\tilde{\bm{h}}_2$ are of different qualities. We call it \emph{Beacon ZF} (bZF) because it makes use of an arbitrary vector known at both TXs. 

The beamformer used to transmit symbol~$i$ is then $\bm{t}_i^{\mathrm{bZF}}\triangleq[t_{1i}^{\mathrm{bZF}(1)},t_{2i}^{\mathrm{bZF}(2)}]^ {\trans}$, with its elements defined as
\begin{equation}  
\bm{t}_i^{\mathrm{bZF}(j)}\triangleq
\begin{bmatrix}t_{1i}^{\mathrm{bZF}(j)}\\t_{2i}^{\mathrm{bZF}(j)} \end{bmatrix}\triangleq \sqrt{\frac{P}{2}}\frac{
\Pi_{\tilde{\bm{h}}_{\bar{i}}^{(j)}}^{\perp}\left(\bm{c}_{i}\right)}{\norm{\Pi_{\tilde{\bm{h}}_{\bar{i}}^{(j)}}^{\perp}\left(\bm{c}_{i}\right)}}
\label{eq:def_bZF}
\end{equation}
where~$\bm{c}_{i}$ is a vector chosen beforehand and known at the TXs. Due to the isotropy of the channel, the choice of~$\bm{c}_{i}$ is arbitrary and does not influence the performances of the precoder.

\begin{corollary}
The MG achieved with beacon ZF is 
\begin{equation}
\MG^{\mathrm{bZF}}=\min_{j\in\{1,2\}}\alpha_1^{(j)}+\min_{j\in\{1,2\}}\alpha_2^{(j)} 
\label{eq:thm_bZF}
\end{equation}
\label{corollary_MG_bZF_1}
\end{corollary}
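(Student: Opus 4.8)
The plan is to follow closely the proof of Theorem~\ref{thm_MG_cZF}, exploiting the single structural feature that distinguishes bZF from cZF: in \eqref{eq:def_bZF} the vector $\bm c_i$ is fixed and \emph{identically and perfectly known} at both TXs, so the only imperfect --- and the only inconsistent --- quantity entering the beamformer $\bm t_i^{\mathrm{bZF}}$ is the pair of estimates $\tilde{\bm h}_{\bar i}^{(1)},\tilde{\bm h}_{\bar i}^{(2)}$. Since by \eqref{eq:SM_4} the designs of $\bm t_1$ and $\bm t_2$ decouple, it suffices to prove $\MG_i^{\mathrm{bZF}}=\min_{j\in\{1,2\}}\alpha_i^{(j)}$ for each $i$ and sum over $i$.

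\emph{Achievability.} Introduce the genie beamformer $\bm t_{\bar i}^{\mathrm{g}}\triangleq\sqrt{P/2}\,\Pi^{\perp}_{\tilde{\bm h}_i}(\bm c_{\bar i})/\norm{\Pi^{\perp}_{\tilde{\bm h}_i}(\bm c_{\bar i})}$, i.e., the beamformer \eqref{eq:def_bZF} for symbol $\bar i$ computed from the true channel. Because the orthogonal complement of $\tilde{\bm h}_i$ in $\mathbb{C}^2$ is one-dimensional, $\bm t_{\bar i}^{\mathrm{g}}=\sqrt{P/2}\,\zeta\,\bm e$ with $|\zeta|=1$ and $\bm e$ a unit vector orthogonal to $\tilde{\bm h}_i$, so $\bm h_i^{\He}\bm t_{\bar i}^{\mathrm{g}}=0$ exactly. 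Each coordinate of $\bm t_{\bar i}^{\mathrm{bZF}}$ is produced by the same map (including the phase fixing of subsection~\ref{se:RVQ}) applied to $\tilde{\bm h}_i^{(j)}$ in place of $\tilde{\bm h}_i$, and this map is Lipschitz except on the event --- of probability zero over $\mathbf H$ --- that $\bm c_{\bar i}$ is aligned with $\tilde{\bm h}_i$, hence $\norm{\bm t_{\bar i}^{\mathrm{bZF}}-\bm t_{\bar i}^{\mathrm{g}}}^2\le C\,P\max_{j}\sin^2\angle(\tilde{\bm h}_i^{(j)},\tilde{\bm h}_i)$ for some random $C$ with finite mean. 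Then $\mathcal I_i(\bm t_{\bar i}^{\mathrm{bZF}})=|\bm h_i^{\He}(\bm t_{\bar i}^{\mathrm{bZF}}-\bm t_{\bar i}^{\mathrm{g}})|^2\le\norm{\bm h_i}^2\norm{\bm t_{\bar i}^{\mathrm{bZF}}-\bm t_{\bar i}^{\mathrm{g}}}^2$, and since the RVQ codebooks are independent of $\mathbf H$, taking expectations with the per-codeword quantization bound $\E[\sin^2\angle(\tilde{\bm h}_i^{(j)},\tilde{\bm h}_i)]=\Theta(P^{-\alpha_i^{(j)}})$ established in the proof of Theorem~\ref{thm_MG_cZF} gives $\E[\mathcal I_i(\bm t_{\bar i}^{\mathrm{bZF}})]=\Theta(P^{\,1-\min_{j}\alpha_i^{(j)}})$. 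The useful signal satisfies $|\bm h_i^{\He}\bm t_i^{\mathrm{bZF}}|^2=\Theta(P)$ with $\E[\log_2|\bm h_i^{\He}\bm t_i^{\mathrm{bZF}}|^2]\ge\log_2 P-O(1)$, since $\bm t_i^{\mathrm{bZF}}\approx\sqrt{P/2}\,\zeta'\,\bm e'$ with $\bm e'\perp\tilde{\bm h}_{\bar i}$, $\bm h_i^{\He}\bm e'\neq0$ a.s., and $\E[\log\sin^2\angle(\tilde{\bm h}_i,\tilde{\bm h}_{\bar i})]>-\infty$. Plugging into \eqref{eq:SM_3} and bounding $\E[\log_2(1+\mathcal I_i)]\le\log_2(1+\E[\mathcal I_i])$ by Jensen yields $R_i(P)\ge\min_{j}\alpha_i^{(j)}\log_2 P-o(\log_2 P)$, i.e., ``$\ge$'' in \eqref{eq:thm_bZF} after summing over $i$.

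\emph{Converse.} Rewriting $1+|\bm h_i^{\He}\bm t_i^{\mathrm{bZF}}|^2/(1+\mathcal I_i)=(1+|\bm h_i^{\He}\bm t_i^{\mathrm{bZF}}|^2+\mathcal I_i)/(1+\mathcal I_i)$ in \eqref{eq:SM_3} and applying Jensen to the numerator gives $R_i(P)\le\log_2 P+O(1)-\E[\log_2(1+\mathcal I_i(\bm t_{\bar i}^{\mathrm{bZF}}))]$, so it remains to show $\E[\log_2(1+\mathcal I_i(\bm t_{\bar i}^{\mathrm{bZF}}))]\ge(1-\min_{j}\alpha_i^{(j)})\log_2 P-o(\log_2 P)$. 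The residual $\bm t_{\bar i}^{\mathrm{bZF}}-\bm t_{\bar i}^{\mathrm{g}}$ has $j$-th coordinate of order $\sqrt{P}\,\sin\angle(\tilde{\bm h}_i^{(j)},\tilde{\bm h}_i)$ with ``directions'' that are independent across $j$ (separate codebooks) and independent of $\bm h_i$; arguing as in the proof of Theorem~\ref{thm_MG_cZF}, the two contributions do not cancel inside $\bm h_i^{\He}(\bm t_{\bar i}^{\mathrm{bZF}}-\bm t_{\bar i}^{\mathrm{g}})$ except on an event of small probability, so $\mathcal I_i(\bm t_{\bar i}^{\mathrm{bZF}})\ge c\,P\max_{j}\sin^2\angle(\tilde{\bm h}_i^{(j)},\tilde{\bm h}_i)$ with probability bounded below, and the anti-concentration of $\sin^2\angle$ under RVQ (which gives $\E[\log_2\sin^2\angle(\tilde{\bm h}_i^{(j)},\tilde{\bm h}_i)]=-\alpha_i^{(j)}\log_2 P+O(1)$) supplies the required logarithmic lower bound, letting the small-probability cancellation event shrink by taking the constants to $0$ after the limit in $P$. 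Combining the two parts gives $\MG_i^{\mathrm{bZF}}=\min_{j}\alpha_i^{(j)}$ and hence \eqref{eq:thm_bZF}.

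\emph{Main obstacle.} Everything up to ``$\ge$'' is a routine perturbation estimate once one notes that $\bm c_i$ is noiseless --- this is precisely why the exponent is $\min_j\alpha_i^{(j)}$ and not the $\min_{i,j}\alpha_i^{(j)}$ of cZF, the phase induced on the effective beamformer by $\bm c_i$ carrying no quantization error. The delicate part is the converse: ruling out an accidental cancellation of the two TXs' independent leakage contributions, and controlling $\E[\log(1+\mathcal I_i)]$ on the rare realizations where $\mathcal I_i$ is atypically small. This is exactly the technical heart of the proof of Theorem~\ref{thm_MG_cZF}, which can be reused with $\tilde{\bm h}_i^{(j)}$ in the role of the quantized channel and $\bm c_{\bar i}$ replacing $\tilde{\bm h}_{\bar i}^{(j)}$.
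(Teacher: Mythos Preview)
Your proposal is correct and follows essentially the same route as the paper. The paper's proof is a one-line pointer to Theorem~\ref{thm_MG_cZF}, observing that replacing the direct-channel estimate $\tilde{\bm h}_i^{(j)}$ by the fixed beacon $\bm c_i$ removes that source of error, so the analysis of Theorem~\ref{thm_MG_cZF} collapses the $\min_{i,j}$ to a per-receiver $\min_j$; you do precisely this, but spell out the perturbation bound (via the genie beamformer and Lipschitz map) and the converse (reusing Lemma~\ref{lemma_LB} with $\bm c_{\bar i}$ in place of $\tilde{\bm h}_{\bar i}^{(j)}$) rather than leaving them implicit.
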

\begin{proof}
The MG follows easily from Theorem~\ref{thm_MG_cZF}. Indeed, when using beacon ZF, no error is induced by the projection of the direct channel which is replaced by a fixed given vector. In terms of MG, there is no difference between projecting the direct channel or any given vector. Thus, we can 
\end{proof}

\begin{corollary}
Beacon ZF achieves the maximal MG if and only if $\forall i\in\{1,2\},\alpha_i^{(1)}=\alpha_i^{(2)}$. Thus, the inconsistency in the channel realizations between the TXs does not reduce the MG.
\label{corollary_MG_bZF_2}
\end{corollary}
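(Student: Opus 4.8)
The plan is to obtain the characterization by a direct comparison of two quantities we already have. From Corollary~\ref{corollary_MG_bZF_1} we know that beacon ZF achieves
\begin{equation}
\MG^{\mathrm{bZF}}=\min_{j\in\{1,2\}}\alpha_1^{(j)}+\min_{j\in\{1,2\}}\alpha_2^{(j)},
\end{equation}
while the maximal MG (the optimum over all Zero-Forcing strategies for the DCSI-MIMO channel, as stated earlier) is $\max_{j\in\{1,2\}}\alpha_1^{(j)}+\max_{j\in\{1,2\}}\alpha_2^{(j)}$. So the first step is simply to write down that beacon ZF is MG optimal if and only if
\begin{equation}
\min_{j}\alpha_1^{(j)}+\min_{j}\alpha_2^{(j)}=\max_{j}\alpha_1^{(j)}+\max_{j}\alpha_2^{(j)}.
\label{eq:bZF_opt_cond}
\end{equation}

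Next I would argue that \eqref{eq:bZF_opt_cond} is equivalent to the stated termwise condition. For each $i\in\{1,2\}$ one has trivially $\min_{j}\alpha_i^{(j)}\le\max_{j}\alpha_i^{(j)}$. Adding these two inequalities shows that the left-hand side of \eqref{eq:bZF_opt_cond} never exceeds the right-hand side, with equality precisely when both inequalities are tight, i.e.\ when $\min_{j}\alpha_i^{(j)}=\max_{j}\alpha_i^{(j)}$ for $i=1$ and for $i=2$. Since a two-element set $\{\alpha_i^{(1)},\alpha_i^{(2)}\}$ satisfies $\min=\max$ exactly when its two members coincide, this is the same as $\alpha_i^{(1)}=\alpha_i^{(2)}$ for $i\in\{1,2\}$, which is the claimed condition.

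Finally, for the concluding sentence of the corollary I would point out that the condition $\alpha_i^{(1)}=\alpha_i^{(2)}$ restricts only the \emph{scaling exponents} of the two TXs' feedback rates and not the codewords they select: the two TXs may still end up with $\tilde{\bm{h}}_i^{(1)}\neq\tilde{\bm{h}}_i^{(2)}$ drawn from their respective RVQ codebooks, yet beacon ZF still attains the optimal MG, so the inconsistency between the TXs' channel realizations costs nothing in terms of MG. There is essentially no obstacle in this argument; the only point requiring care is to make sure the benchmark in \eqref{eq:bZF_opt_cond} is really the maximal MG, which rests on the optimal-MG result for the DCSI-MIMO channel rather than on anything established in the present subsection.
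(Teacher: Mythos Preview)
Your argument is correct and, in the algebraic part, more explicit than the paper's. Both proofs ultimately compare the beacon ZF value $\min_j\alpha_1^{(j)}+\min_j\alpha_2^{(j)}$ from Corollary~\ref{corollary_MG_bZF_1} to the benchmark $\max_j\alpha_1^{(j)}+\max_j\alpha_2^{(j)}$ and observe that equality forces $\alpha_i^{(1)}=\alpha_i^{(2)}$ for each $i$. The difference is in how the benchmark is justified at this point in the text. You invoke the optimal-MG result for the DCSI-MIMO channel, which in the paper is only established later (Proposition~\ref{proposition} and Theorem~\ref{theorem4}); you correctly flag this as a forward reference. The paper instead gives, right here, a short self-contained genie argument: allow the two TXs to share their estimates, so that for each channel the better estimate can be retained and the worse one discarded; the resulting shared-CSI ZF problem then has MG governed by the retained estimate, yielding the $\max_j\alpha_i^{(j)}$ upper bound without appealing to anything downstream. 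What your route buys is a clean termwise $\min\le\max$ argument with the equality characterization spelled out; what the paper's route buys is logical independence from the later A/P-ZF construction. Your final remark about the ``inconsistency'' sentence is on point and matches the paper's intent: equal scaling exponents suffice, regardless of the actual codewords drawn at each TX.
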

\begin{proof}
Let consider that the TXs are allowed to cooperate, then any of the estimates can be used and the other thrown away. The channel used in the orthogonality constraint is then known with the given accuracy and no ZF can improve the accuracy. Thus, the maximal accuracy is achieved using beacon ZF.
\end{proof}
The key idea behind Beacon ZF is to reduce the impact of the difference in quality between $\bm{h}_1$ and $\bm{h}_2$ by using only the CSI necessary to fulfill the orthogonality constraint and not the direct channel which does not change the MG but only improves the finite SNR performances. Indeed, $\bm{t}_1^{\mathrm{bZF}}$ does not depend on the estimates of $\bm{h}_1$ and symmetrically $\bm{t}_2^{\mathrm{bZF}}$ does not depend on the estimates of $\bm{h}_2$. 

\subsection{Active/Passive - Zero Forcing}
Beacon ZF improves the MG in some settings but it is still the worst CSI quality across the TXs which implies the MG. Thus, we now propose a scheme called \emph{Active/Passive} Zero Forcing (A/P-ZF) to take care of this problem. Assuming wlog that $\alpha_{\bar{i}}^{(2)}\geq \alpha_{\bar{i}}^{(1)}$, it consists in the precoder whose beamformer to transmit symbol~$i$ is given by 
\begin{equation}
\bm{t}_i^{\mathrm{A/P-ZF}}\!\triangleq\!\sqrt{\!\frac{P(1\!+\!\rho_{i}^{(2)})}{2\log_2(P)}}\bm{u}_i^{\mathrm{apZF}}\!\triangleq\!\sqrt{\frac{P}{2\log_2(P)}}\!\begin{bmatrix}1\\
-\frac{\tilde{h}_{\bar{i}1}^{(2)}}{\tilde{h}_{\bar{i}2}^{(2)}}\!
\end{bmatrix} 
\label{eq:def_apZF}
\end{equation}
where $\tilde{\bm{h}}_{\bar{i}}^{(2)\He}\!\triangleq\![\tilde{h}_{\bar{i}1}^{(2)},\tilde{h}_{\bar{i}2}^{(2)}]$, $\rho_{i}^{(2)}\!\triangleq \!|\tilde{h}_{\bar{i} 1}^{(2)}|^2/|\tilde{h}_{\bar{i} 2}^{(2)}|^2$, $\norm{\bm{u}_i^{\mathrm{A/P-ZF}}}=1$.

A/P-ZF is based on the idea that each beamforming vector has to fullfill one orthogonality constraint so that only degree of freedom is necessary. Thus, one coefficient can be set to a constant while still fullfilling the ZF constraints. Additionnaly, the other underlying idea is that the only way to achieve the MG steming from the best CSI estimate is if TX~$2$ (which has the best knowledge of  $\tilde{\bm{h}}_1$) can adapt to the transmission done at TX~$1$ to adjust its beamforming vector and improves how the interference are suppressed. This is possible only if TX~$2$ knows the coefficient used to transmit at TX~$1$ which means that TX~$1$ should not use its own CSI and transmit with a fixed coefficient. The MG using this precoding scheme is then given in the following proposition.

\begin{proposition} 
Active-Passive ZF achieves the MG:
\begin{equation}
\MG^{\mathrm{A/P-ZF}}\geq\max_{j\in[1,2]}\alpha_1^{(j)}+\max_{j\in[1,2]}\alpha_2^{(j)}.
\label{eq:thm:apZF}
\end{equation}
\label{proposition}
\end{proposition}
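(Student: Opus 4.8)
The plan is to lower-bound the throughput $R_i(P)$ of \eqref{eq:SM_3} for each RX separately and then add the two bounds. Fix $i\in\{1,2\}$; by the wlog convention we may take TX~$2$ to be the ``active'' TX for the beamformer that has to be orthogonal to the channel of RX~$i$, so that $\max_{j\in\{1,2\}}\alpha_i^{(j)}=\alpha_i^{(2)}$. The two quantities that matter are the received signal power $S_i\triangleq|\bm h_i^{\He}\bm t_i^{\mathrm{A/P-ZF}}|^2$ and the leaked interference power $I_i\triangleq|\bm h_i^{\He}\bm t_{\bar i}^{\mathrm{A/P-ZF}}|^2$, and I will establish $\MG_i\geq\alpha_i^{(2)}$ by showing that, outside an event of vanishing probability, $S_i\geq P/\mathrm{poly}(\log_2 P)$ while $I_i\leq P^{\,1-\alpha_i^{(2)}}\,\mathrm{poly}(\log_2 P)$. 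Two structural observations drive this. First, $\bm t_{\bar i}^{\mathrm{A/P-ZF}}$ lies, by construction, exactly in the null space of $\tilde{\bm h}_i^{(2)\He}$, the estimate held at the active TX of the normalized channel to RX~$i$, which is quantized with $\alpha_i^{(2)}\log_2(P)$ bits. Second, $\bm t_i^{\mathrm{A/P-ZF}}$ does not depend on $\tilde{\bm h}_i$ at all.

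For the interference, write $\bm t_{\bar i}^{\mathrm{A/P-ZF}}=\sqrt{P(1+\rho_{\bar i}^{(2)})/(2\log_2 P)}\,\bm u_{\bar i}^{\mathrm{apZF}}$ with $\|\bm u_{\bar i}^{\mathrm{apZF}}\|=1$ and $\tilde{\bm h}_i^{(2)\He}\bm u_{\bar i}^{\mathrm{apZF}}=0$. Rotating the phase of $\tilde{\bm h}_i$ so that its first entry is real leaves $|\bm h_i^{\He}\bm t_{\bar i}^{\mathrm{A/P-ZF}}|$ unchanged and turns $\tilde{\bm h}_i$ into the complex vector that corresponds to the $\mathbb{R}^{3}$ representative used in the quantization; subtracting the estimate (which is annihilated by $\bm u_{\bar i}^{\mathrm{apZF}}$) and applying Cauchy--Schwarz gives
\begin{equation}
I_i\;\leq\;\|\bm h_i\|^2\,\big\|\tilde{\bm h}_{i\mathbb{R}^{3}}-\tilde{\bm h}^{(2)}_{i\mathbb{R}^{3}}\big\|^{2}\,\frac{P\,(1+\rho_{\bar i}^{(2)})}{2\log_2 P}.
\end{equation}
The middle factor is the RVQ quantization error in the reduced $\mathbb{R}^{3}$ space, whose mean scales as $2^{-B_i^{(2)}}=P^{-\alpha_i^{(2)}+o(1)}$ --- this is exactly the quantization-error scaling that underlies the proof of Theorem~\ref{thm_MG_cZF}, now applied to the sphere in $\mathbb{R}^{3}$ rather than to the complex line. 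For the signal, since $\bm u_i^{\mathrm{apZF}}$ is a unit vector independent of $\tilde{\bm h}_i$, the isotropy of $\tilde{\bm h}_i$ makes $|\tilde{\bm h}_i^{\He}\bm u_i^{\mathrm{apZF}}|^{2}$ uniformly distributed on $[0,1]$, so that $S_i\geq\|\bm h_i\|^{2}\,|\tilde{\bm h}_i^{\He}\bm u_i^{\mathrm{apZF}}|^{2}\,P/(2\log_2 P)$, with an order-one random prefactor whose logarithm is integrable.

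Next I would fix a ``good event'' $\mathcal G$ on which $\rho_1^{(2)},\rho_2^{(2)}\leq\log_2 P$, each quantization error is at most its mean times $\log_2 P$, $\|\bm h_i\|^{2}\leq\log_2^{2}P$, and $\|\bm h_i\|^{2}|\tilde{\bm h}_i^{\He}\bm u_i^{\mathrm{apZF}}|^{2}\geq\log_2^{-2}P$. Using the explicit laws above (uniform, $\Gamma(2,1)$, and the RVQ-error tail via Markov), $\Pr(\mathcal G)\to 1$. On $\mathcal G$ one reads off $I_i\leq P^{\,1-\alpha_i^{(2)}}\,\mathrm{poly}(\log_2 P)$ and $S_i\geq P/\mathrm{poly}(\log_2 P)$, hence $\log_2\!\big(1+S_i/(1+I_i)\big)\geq\alpha_i^{(2)}\log_2 P-O(\log_2\log_2 P)$, both when $\alpha_i^{(2)}<1$ and when $\alpha_i^{(2)}=1$. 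Since $\log_2(1+S_i/(1+I_i))\geq 0$ always, $R_i(P)\geq\Pr(\mathcal G)\,\big(\alpha_i^{(2)}\log_2 P-O(\log_2\log_2 P)\big)$ for $P$ large; dividing by $\log_2 P$ and letting $P\to\infty$ gives $\MG_i\geq\alpha_i^{(2)}=\max_{j\in\{1,2\}}\alpha_i^{(j)}$, and summing over $i\in\{1,2\}$ yields \eqref{eq:thm:apZF}.

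The step I expect to cost the most work is the power constraint, and it is the reason for the $1/\log_2 P$ scaling in the definition of A/P-ZF. Because $\rho_i^{(2)}$ is a ratio of (essentially) isotropic channel coordinates, $\Exp[\rho_i^{(2)}]=\infty$, so the precoder exactly as written does not satisfy $\Exp[\|\mathbf{T}\|_{\Fro}^{2}]\leq P$; the fix is to truncate --- for instance, to revert to conventional ZF (or switch the corresponding beamformer off) whenever $\rho_i^{(2)}>\log_2 P$. Then $\Exp[\rho_i^{(2)}\,\bm 1\{\rho_i^{(2)}<\log_2 P\}]=O(\log\log_2 P)$ forces $\Exp[\|\mathbf{T}\|_{\Fro}^{2}]=o(P)$, the truncation event has probability $O(1/\log_2 P)$ and adds only $O(1)$ to each $R_i$, and it is disjoint from $\mathcal G$, so the lower bound above is unaffected. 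The only remaining, purely technical, item is to make the ``order-one prefactor'' bookkeeping precise --- integrability of $\log\|\bm h_i\|^{2}$ and of $\log|\tilde{\bm h}_i^{\He}\bm u_i^{\mathrm{apZF}}|^{2}$, and control of the analogous factor appearing in $S_i$ uniformly in $\rho_i^{(2)}$ --- which all follow from the explicit densities.
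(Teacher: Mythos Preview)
Your argument is correct and follows the same skeleton as the paper's proof: the interference at RX~$i$ is controlled because $\bm u_{\bar i}^{\mathrm{apZF}}$ is exactly orthogonal to the best available estimate $\tilde{\bm h}_i^{(2)}$, so the residual is bounded by the quantization error $\sin^2(\angle(\tilde{\bm h}_i,\tilde{\bm h}_i^{(2)}))\sim P^{-\alpha_i^{(2)}}$, while the signal term contributes a full unit of MG since $\bm u_i^{\mathrm{apZF}}$ is independent of $\tilde{\bm h}_i$. The only methodological difference is in the bookkeeping: the paper goes through Jensen's inequality, writing $\MG_1\geq 1-\lim_{P\to\infty}\log_2\bigl(\E[\mathcal I_1(\bm t_2)]\bigr)/\log_2 P$ and then inserting the expected quantization distortion from Proposition~\ref{App_distorsion}, whereas you localise to a high-probability ``good event'' and read off the rate pointwise. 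Both arrive at $\MG_i\geq\alpha_i^{(2)}$ by the same mechanism.

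Your treatment of the power constraint is in fact more careful than the paper's. The paper notes informally that the $1/\log_2 P$ prefactor is meant to absorb the unbounded ratio $\rho_i^{(2)}$, but does not make precise what happens on the event $\{\rho_i^{(2)}>\log_2 P\}$ where the instantaneous power exceeds $P$; your explicit truncation (revert to a bounded-power scheme on that event, of probability $O(1/\log_2 P)$) is exactly the clean way to close this, and costs nothing in MG.
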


\begin{proof}
Due to the symmetry between the two RXs, we consider only the MG at RX~$1$, and we consider that the beamformers $\bm{t}_1$ and $\bm{t}_2$ are given by \eqref{eq:def_apZF}. We still assume wlog that $\alpha_{1}^{(2)}\geq \alpha_{1}^{(1)}$, i.e., TX~$2$ has the best CSI over $\tilde{\bm{h}}_1$. Using A/P-ZF, the MG at RX~$1$ reads as
\begin{align}
\MG_1&\!=\! \!\lim_{P\rightarrow\infty}\!\!\!\frac{\E_{\mathbf{H},\mathcal{W}}\left[\log_2\left(1\!+\!\frac{\norm{\bm{h}_1}^2\norm{\bm{t}_1}^2|\tilde{\bm{h}}_1^{\He}\bm{u}_1|^2}{\sigma_1^2+\mathcal{I}_1(\bm{t}_{2})}\!\right)\!\right]}{\log_2(P)}\nonumber\\
&\!=\! \lim_{P\rightarrow \infty}\frac{\E_{\mathbf{H},\mathcal{W}}\left[\log_2\left(\frac{(\rho_{2}^{(2)}+1)P}{\log_2(P)}\right)-\log_2(\mathcal{I}_1(\bm{t}_2))\right]}{\log_2(P)}\nonumber\\
&\!= 1-\lim_{P\rightarrow \infty}\frac{\E_{\mathbf{H},\mathcal{W}}\left[\log_2(\mathcal{I}_1(\bm{t}_{2}))\right]}{\log_2(P)}\\
&\!\geq 1-\lim_{P\rightarrow \infty}\frac{\log_2\left(\E_{\mathbf{H},\mathcal{W}}\left[\mathcal{I}_1(\bm{t}_{2})\right]\right)}{\log_2(P)}
\label{eq:proofapZF_1}
\end{align}
where we have used Jensen's inequality for the last inequality. We now consider the interference term~$\mathcal{I}_1(\bm{t}_{2})$:
\begin{equation}
\mathcal{I}_1(\bm{t}_2)=|\bm{h}_1^{\He}\bm{t}_2|^2=\frac{P}{2\log_2(P)}\left|\bm{h}_1^{\He}\begin{bmatrix}1\\ -\frac{\tilde{h}_{11}^{(2)}}{\tilde{h}_{12}^{(2)}}\end{bmatrix}\right|^2.
\end{equation}
By construction, $\bm{t}_2$ is orthogonal to $\bm{h}_1^{(2)}$, so that 
\begin{align}
\mathcal{I}_1(\bm{t}_2)&=\frac{P(1+\rho_{2}^{(2)})}{2\log_2(P)}\left|\left(\Pi_{\bm{h}_1^{(2)}}^{\perp}(\bm{h}_1)+\Pi_{\bm{h}_1^{(2)}}(\bm{h}_1)\right)^{\He}\bm{u}_2 \right|^2\nonumber\\
&=\frac{P(1+\rho_{2}^{(2)})}{2\log_2(P)}\norm{\bm{h}_1}^2 \left|\Pi_{\bm{h}_1^{(2)}}^{\perp}(\bm{h}_1)^{\He}\bm{u}_2 \right|^2\\
&\leq\frac{P(1+\rho_{2}^{(2)})}{2\log_2(P)}\norm{\bm{h}_1}^2 \norm{\Pi_{\bm{h}_1^{(2)}}^{\perp}(\bm{h}_1)}^2\\
&\leq\frac{P(1+\rho_{2}^{(2)})}{2\log_2(P)}\norm{\bm{h}_1}^2 \sin^2(\angle(\bm{h}_1,\bm{h}_1^{(2)})).
\label{eq:proofapZF_3}
\end{align} 
Inserting \eqref{eq:proofapZF_3} into the MG expression \eqref{eq:proofapZF_1} and using Proposition~\ref{App_distorsion} of Appendix~\ref{se:Appendix_RVQ}, we obtain
\begin{equation}
\begin{aligned} 
\MG_1&\geq\lim_{P\rightarrow \infty} \frac{-E_{\mathbf{H},\mathcal{W}}\left[\log_2(|\Pi_{\bm{h}_1^{(2)}}^{\perp}(\bm{h}_1)^{\He}\bm{u}_2|^2)\right]}{\log_2(P)}\\ 
\MG_1&\geq\lim_{P\rightarrow \infty} \frac{B_1^{(2)}}{\log_2(P)}= \alpha_1^{(2)} 
\end{aligned}
\label{eq:proofapZF_5}
\end{equation}
which is the best scaling across the TXs.
\end{proof}
Comparing the MG achieved with A/P-ZF with the MG achieved when both TXs are allowed to exchange their channel estimates, the following fundamental result follows directly.
\begin{theorem} 
Active/Passive ZF achieves the maximal MG.
\label{theorem4}
\end{theorem}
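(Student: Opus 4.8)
The plan is to show that the lower bound on the multiplexing gain achieved by A/P-ZF, established in Proposition~\ref{proposition}, coincides with a converse (upper) bound valid for \emph{any} zero-forcing strategy in the DCSI-MIMO channel, so that the two bounds sandwich the optimal MG. By symmetry it suffices to argue at RX~$1$. Proposition~\ref{proposition} already gives $\MG^{\mathrm{A/P-ZF}} \geq \max_{j}\alpha_1^{(j)} + \max_{j}\alpha_2^{(j)}$, so the whole task reduces to proving the matching upper bound $\MG_i \leq \max_{j}\alpha_i^{(j)}$ for any admissible precoder.

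The key step is the converse. First I would invoke the genie-aided (cooperative) argument already used in the proof of Corollary~\ref{corollary_MG_bZF_2}: allow the two TXs to exchange their channel estimates of $\tilde{\bm{h}}_{\bar{i}}$. This can only increase the achievable MG, and it collapses the distributed-CSI problem into a \emph{centralized} limited-feedback MIMO BC in which the effective quality of the estimate of $\tilde{\bm{h}}_{\bar{i}}$ is governed by the \emph{best} of the two per-TX quantizations, i.e.\ by $B_i^{(j)}$ with the larger $\alpha_i^{(j)}$ (the genie can simply discard the coarser estimate). Then I would apply the single-TX limited-feedback result of Jindal~\cite{Jindal2006}: with $B = \alpha \log_2 P$ quantization bits, the residual interference power scales like $P^{-\alpha}$, and the corresponding per-user rate loss caps the MG at $\alpha$. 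Consequently $\MG_i \leq \max_{j\in\{1,2\}}\alpha_i^{(j)}$, and summing over $i=1,2$ gives $\MG \leq \max_{j}\alpha_1^{(j)} + \max_{j}\alpha_2^{(j)}$. Combined with Proposition~\ref{proposition}, this forces equality, which is exactly the claimed optimality.

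The main obstacle I anticipate is making the genie-aided reduction airtight. One must argue that (i) handing the estimates to the other TX is a legitimate relaxation that upper-bounds the MG of \emph{every} distributed scheme, not just ZF-type ones in the narrow sense, and (ii) after the relaxation the resulting channel is genuinely the centralized finite-rate-feedback BC studied by Jindal, so that his upper bound on MG transfers without modification. A subtle point is that Jindal's converse is stated for RVQ on the Grassmann manifold, whereas here the quantization has been lifted to $\mathbb{R}^3$ with a real first coordinate; I would remark that this reparametrization is a measure-isomorphism that does not change the scaling exponent of the quantization error, so the exponent $\alpha$ is preserved. Finally I would note that the uncoupling of the two beamformers established after~\eqref{eq:SM_4} lets us treat $\MG_1$ and $\MG_2$ independently, so the two converse bounds add up cleanly to the total.

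Given that all the ingredients are already present in the excerpt, the write-up is short: a one-paragraph converse followed by the sandwich. I would state it as follows.

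\begin{IEEEproof}
By the symmetry between the two RXs and the uncoupling of $\bm{t}_1$ and $\bm{t}_2$ noted after~\eqref{eq:SM_4}, it suffices to bound $\MG_i$ for each $i$ separately. Consider a genie that reveals to both TXs the two estimates $\tilde{\bm{h}}_{\bar{i}}^{(1)},\tilde{\bm{h}}_{\bar{i}}^{(2)}$ of the channel appearing in the orthogonality constraint for $\bm{t}_i$; this can only enlarge the set of achievable rates, hence the resulting MG upper-bounds $\MG_i$ for any distributed precoder. In the genied channel the TXs act as a single virtual transmitter and may simply keep the finer of the two estimates, so the problem becomes the centralized single-transmitter limited-feedback problem of~\cite{Jindal2006} with $\max_{j\in\{1,2\}}B_i^{(j)}$ quantization bits; the $\mathbb{R}^3$ reparametrization of Section~\ref{se:RVQ} is a measure-isomorphism that leaves the error exponent unchanged. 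By~\cite{Jindal2006} the residual interference power then scales no faster than $P^{-\max_j \alpha_i^{(j)}}$, so $\MG_i \leq \max_{j\in\{1,2\}}\alpha_i^{(j)}$. Summing over $i=1,2$ yields $\MG \leq \max_{j}\alpha_1^{(j)} + \max_{j}\alpha_2^{(j)}$. Proposition~\ref{proposition} gives the reverse inequality for A/P-ZF, so A/P-ZF achieves this MG, which is therefore the maximal one.
\end{IEEEproof}
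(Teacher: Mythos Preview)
Your proposal is correct and follows essentially the same genie-aided argument as the paper: relax to a cooperative setting by allowing the TXs to share their estimates, keep only the best estimate of each channel, and invoke the single-TX limited-feedback converse to obtain the upper bound that matches Proposition~\ref{proposition}. The only cosmetic difference is that the paper routes the converse through Beacon-ZF and Corollary~\ref{corollary_MG_bZF_1} rather than citing~\cite{Jindal2006} directly, and your write-up is in fact more careful about the decoupling of $\bm t_1,\bm t_2$ and the $\mathbb{R}^3$ reparametrization.
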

\begin{proof}
Let assume that the sharing of the channel estimates is allowed between the TXs. Then it is optimal to use the best estimates for each of the channel vector and simply throw the other estimate. In that case, the TXs share the same CSI quality and it is optimal to apply Beacon-ZF which achieves the same MG as given in \ref{corollary_MG_bZF_1}. This MG is an upper bound for the MG so that A/P-ZF achieves the maximal MG.
\end{proof}

A/P-ZF allows to recover the MG which would have been achieved with the sharing of the estimates. However, one last point remains to be discussed: the choice of the coefficient used to transmit at TX~$1$. Actually, the beamformer can be multiplied arbitrarily by any unit norm complex number without impacting the rate achieved, so that only the power used at TX~$1$ needs to be decided. In \eqref{eq:def_cZF}, the power used is set to $P/(2\log_2(P))$, which follows the fact that the channel $h_{22}$ might have a very small amplitude, in which case it would be necessary for TX~$2$ to transmit with a very large power to cancel the interference. To ensure that the interference are canceled for all channel realizations while respecting the power constraint, it is necessary to have the ratio between the power used at TX~$1$ and the total power tending to zero. The factor $\log_2(P)$ is used because it fulfills this property while not reducing the MG due to the partial power consumption.

\subsection{Power Control for A/P-ZF Precoding}

We have seen that A/P-ZF could achieve a much better MG than conventional ZF. However, this comes at the cost of using only a small share of the available power, which is clearly inefficient and leads to bad performances at finite SNR. To improve the performances, the TX with the worst accuracy needs to adapt its power consumption to the channel realizations. In the following, we propose two possible solutions.
\begin{itemize}
\item 
Firstly, TX~$1$ can use its local CSI to normalize the beamformer which is then given by
\begin{equation}
\bm{t}_i^{\mathrm{apZF}}=\sqrt{\frac{P}{2}}\begin{bmatrix}\frac{1}{\sqrt{1+\rho_i^{(1)}}}\\
-\frac{\tilde{h}_{\bar{i}1}^{(2)}}{\sqrt{1+\rho_i^{(2)}}\tilde{h}_{\bar{i}2}^{(2)}}
\end{bmatrix} 
\label{eq:def_hapZF}
\end{equation}
with $\rho_{i}^{(j)}\triangleq |\tilde{h}_{\bar{i} 1}^{(j)}|^2/|\tilde{h}_{\bar{i} 2}^{(j)}|^2$. This beamformer is not MG maximizing because the local CSI is used at TX~$1$ so that TX~$2$ cannot adapt to it to cancel the interference, and the beamformer is not orthogonal to $\tilde{\bm{h}}_{\bar{i}}^{(2)}$. Yet, this solution achieves good performance at intermediate SNR.RVQ

\item Another possibility is to assume that TX~$1$ receives the scalar $\rho_i^ {(2)}$ (or $\rho_i$) and use it to control its power. This means that either the RX or TX~$2$ needs to feedback this scalar. It requires an additionnal feedback, but only a few bits are necessary, because it is only used to improve the power efficiency and does not impact the MG. Thus, the feedback of this scalar does not change the scaling of the CSI in terms of the SNR nor the performances, and appears thus as an interesting practical solution. 
\end{itemize}
\section{Simulations}
We consider two models for the imperfect channel CSI, a statistical model and RVQ. In the statistical model, the quantization error is modeled by adding a Gaussian i.i.d. quantization noise to the channel with the covariance matrix at TX~$j$ equal to $\diag([1/P^{\alpha_1^{(j)}},1/P^{\alpha_2^{(j)}}])$. When considering given finite number of feedback bits, we compute $\alpha_i^{(j)}=B_i^{(j)}/\log_2(P)$, so that $\diag([1/P^{\alpha_1^{(j)}},1/P^{\alpha_2^{(j)}}])=\diag([1/2^{B_1^{(j)}},1/2^{B_2^{(j)}}])$. For RVQ, we consider a number of quantizing bits either numerically given or obtained from the CSI scaling as $q_i^{(j)}=\lfloor \alpha_i^{(j)}\log_2(P)\rfloor$. In the statistical model, we average over $10000$ realization and for RVQ we average over $100$ codebooks and $1000$ channel realizations. In the simulations. we consider the following precoders: ZF with perfect CSI, conventional ZF [cf.~\eqref{eq:def_cZF}], beacon ZF [cf.~\eqref{eq:def_bZF}], and Active/Passive ZF [cf.~\eqref{eq:def_apZF}] with heuristic power control and with $4$-bits power control.

In Fig.~\ref{figure1}, we consider the statistical model with the CSI scaling $[\alpha_1^{(1)},\alpha_1^ {(2)}]=[1,0.5]$ and $[\alpha_2^{(1)},\alpha_2^{(2)}]=[0,0.7]$. To emphasize the MG (i.e., the slope of the curve in the figure), we let the SNR grow large. As expected theoretically, conventional ZF saturates at high SNR, while Beacon ZF has a positive slope and Active/Passive ZF performs close to perfect ZF with a slope only slightly smaller than the optimal one.

\begin{figure}
\centering
\includegraphics[width=1\columnwidth]{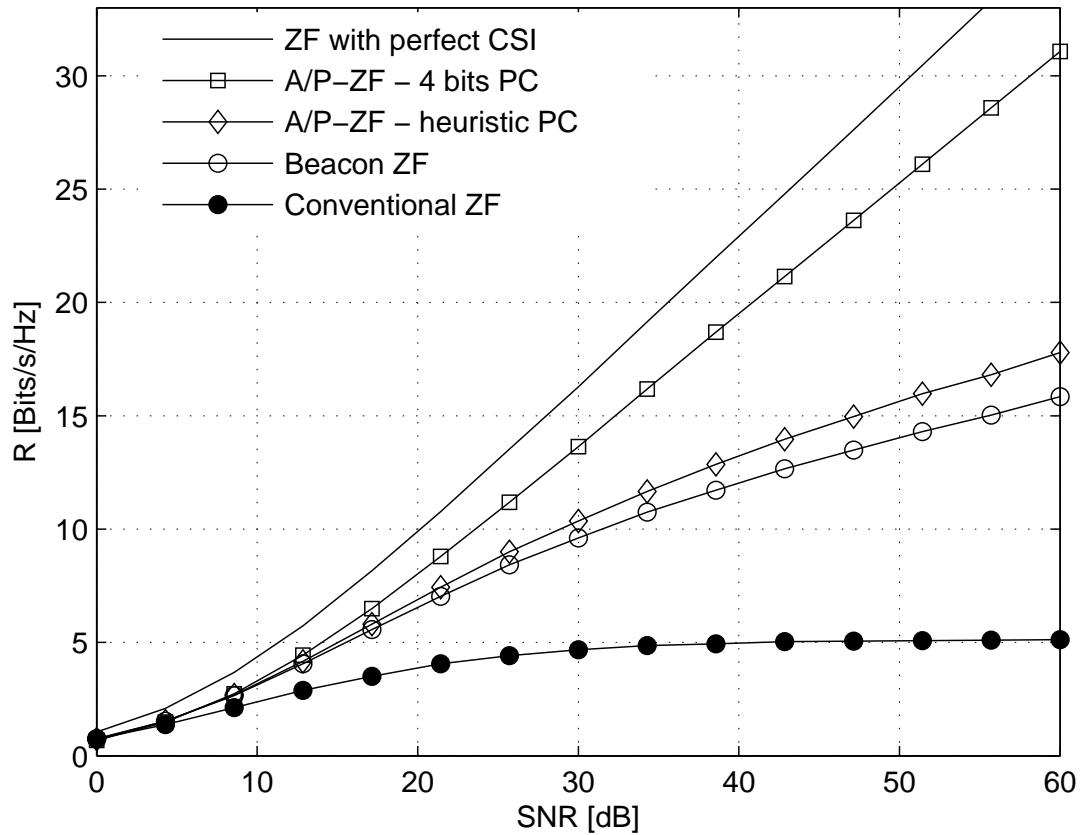}
\caption{Sum rate in terms of the SNR with a statistical modeling of the error from RVQ using $[\alpha_1^ {(1)},\alpha_1^ {(2)}]=[1,0.5]$ and $[\alpha_2^{(1)},\alpha_2^{(2)}]=[0,0.7]$.}
\label{figure1}
\end{figure}

In Fig.~\ref{figure2} and  Fig.~\ref{figure3} we plot the sum rate achieved with the CSI feedback $[B_1^{(1)},B_1^{(2)}]=[6,3]$ and $[B_2^{(1)},B_2^{(2)}]=[3,6]$ for the statistical modeling and RVQ, respectively. Firstly, we can observe the good match between the two models used. From the theoretical analysis the MG is null for all the precoding schemes for a finite number of feedback bits, which can be observed by the saturation of the sum rate as the SNR grows. Yet, the saturation occurs at higher SNR for Beacon $ZF$ compared to conventional ZF, and at even higher SNR for Active/Passive-ZF, which leads to an improvement of the sum rate even at intermediate SNR. 

\begin{figure}
\centering
\includegraphics[width=1\columnwidth]{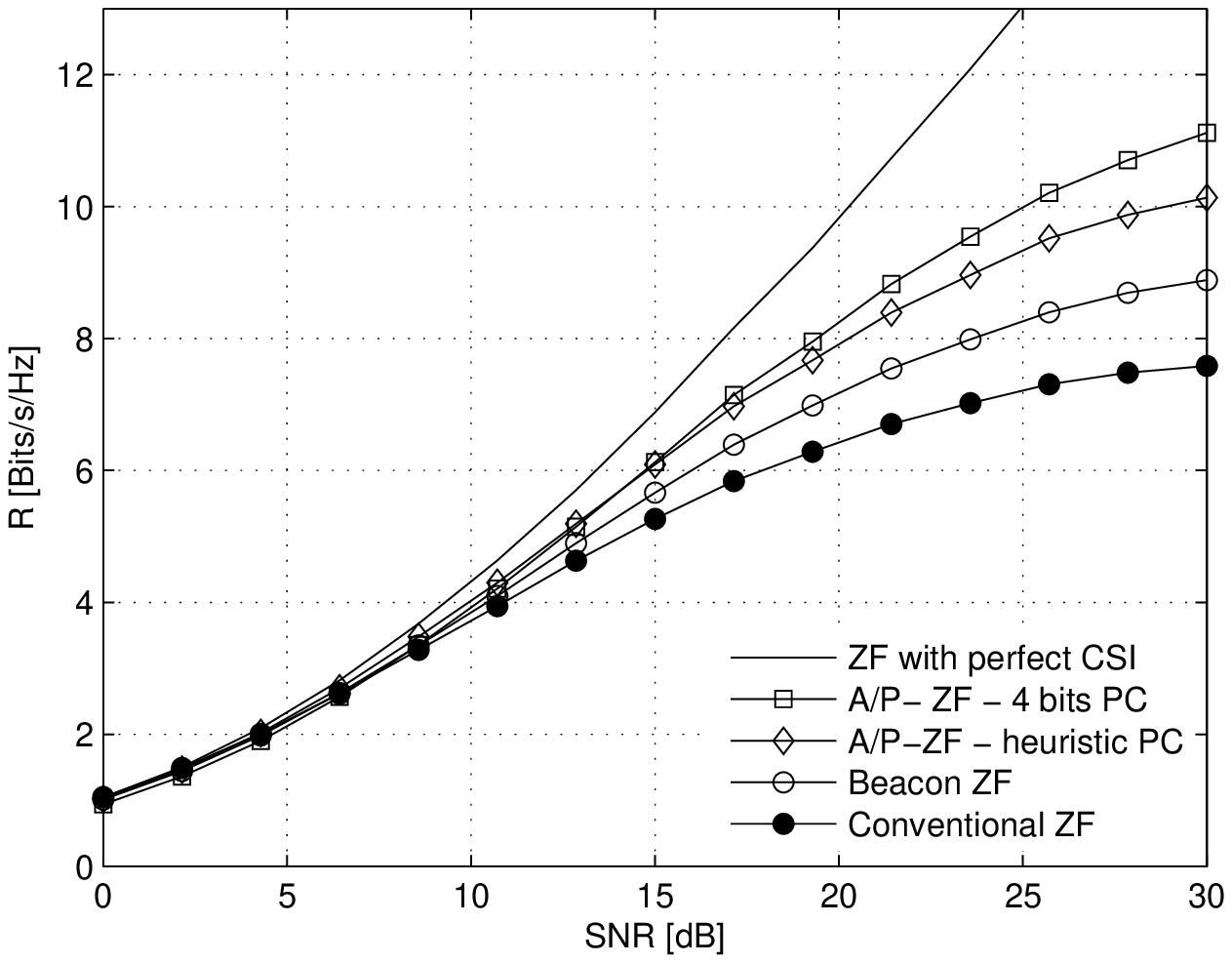}
\caption{Sum rate in terms of the SNR with a statistical modeling of the error obtained using $[B_1^{(1)},B_1^{(2)}]=[6,3]$ and $[B_2^{(1)},B_2^{(2)}]=[3,6]$.}
\label{figure2}
\end{figure}

\begin{figure}
\centering
\includegraphics[width=1\columnwidth]{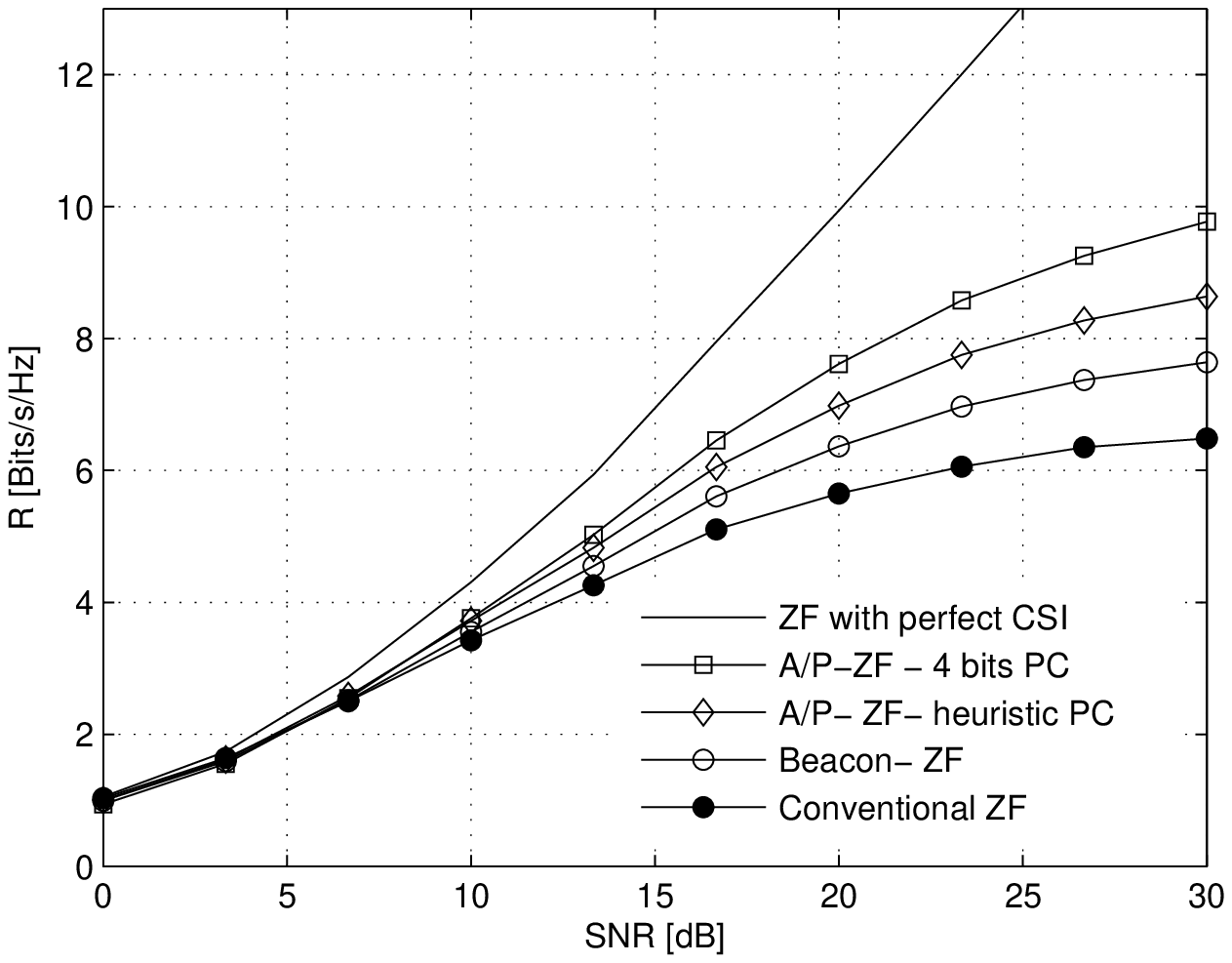}
\caption{Sum rate in terms of the SNR with RVQ using $[B_1^{(1)},B_1^{(2)}]=[6,3]$ and $[B_2^{(1)},B_2^{(2)}]=[3,6]$.}
\label{figure3}
\end{figure}

\section{Conclusion}
In this work, the multiplexing gain in a two-cell broadcast channel where the TXs have different estimates of the multi-user channel has been studied. We have shown that usual Zero Forcing precoding applied without taking into account the differences in CSI quality achieves far from the maximal MG. We have also derived the value of the maximal MG in that distributed CSI configuration and provided a MG maximizing precoding scheme. Moreover, we have shown by simulations that the new precoding approach outperforms known linear precoding schemes at intermediate SNR. We have considered only two TXs and two RXs with a single antenna to keep the notations simple, but the extension to multiple-antenna TXs or RXs appears to be tractable while the analysis in the case of $K$ TX-RX pairs with a single antenna is done in \cite{dekerret2011_ISIT_journal}. 
%


\FloatBarrier
\section{Appendix}
\subsection{Some Results on Vector Quantization}\label{se:Appendix_RVQ}
In this section, we recall some results on vector quantizations from~\cite{Dai2008} and we derive some new properties which will be needed for the following proofs. We consider the angle between two vectors as defined in \eqref{eq:SM_6}. It means that we are considering real vectors of unit norm in the linear space $\mathbb{R}^{2n-1}$ as explained in Subsection~\ref{se:RVQ}.
\begin{proposition}[\cite{Dai2008}, Corollary~$2$]
The cumulative distribution function (CDF) of $\mathrm{d}^2(\tilde{\bm{h}},\bm{c})\triangleq\sin^2(\angle(\tilde{\bm{h}},\bm{c}))$ where $\bm{c}$ is an element of a random codebook is bounded as	
\begin{equation} 
c_{2n-1}x^{n-1}\leq\mathrm{F}(x)\triangleq\mathrm{Pr}\{\sin^2(\angle(\tilde{\bm{h}},\bm{c}))\leq x\}\leq c_{2n-1}x^{n-1}(1-x)^{\frac{-1}{2}}.
\label{eq:App_1}
\end{equation}
where $\tilde{\bm{h}}$ is the true unitary channel
\label{App_CDF}
\end{proposition}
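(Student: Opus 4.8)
The plan is to derive the law of $\sin^2(\angle(\tilde{\bm{h}},\bm{c}))$ in closed form and then bound its distribution function by elementary estimates; since the statement is taken verbatim from \cite{Dai2008}, Corollary~2, one may alternatively simply quote it. The codeword $\bm{c}$ is drawn isotropically on the unit sphere of $\mathbb{R}^{2n-1}$ (cf. Subsection~\ref{se:RVQ}), and the angle used in \eqref{eq:App_1} is invariant under orthogonal transformations, so I would first rotate coordinates so that $\tilde{\bm{h}}$ becomes the first canonical basis vector. Then $\cos^2(\angle(\tilde{\bm{h}},\bm{c}))$ equals the square of the first coordinate of a uniformly distributed unit vector in $\mathbb{R}^{2n-1}$. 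Writing that uniform vector as a normalized standard Gaussian vector $\bm{g}$ and using that $g_1^2\sim\chi^2_1$ and $g_2^2+\dots+g_{2n-1}^2\sim\chi^2_{2n-2}$ are independent, the ratio $g_1^2/(g_1^2+\dots+g_{2n-1}^2)$ has a $\mathrm{Beta}(1/2,n-1)$ distribution. Consequently $\sin^2(\angle(\tilde{\bm{h}},\bm{c}))=1-\cos^2(\angle(\tilde{\bm{h}},\bm{c}))$ has a $\mathrm{Beta}(n-1,1/2)$ distribution, i.e.
\[
\mathrm{F}(x)=\frac{1}{B(n-1,1/2)}\int_{0}^{x}t^{n-2}(1-t)^{-1/2}\intd t.
\]

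Given this closed form, the two inequalities in \eqref{eq:App_1} follow by sandwiching the factor $(1-t)^{-1/2}$ on the interval of integration: for $t\in[0,x]$ one has $1\le(1-t)^{-1/2}\le(1-x)^{-1/2}$, so replacing $(1-t)^{-1/2}$ by $1$ yields the lower bound and replacing it by the constant $(1-x)^{-1/2}$ yields the upper bound, in both cases leaving only the elementary integral $\int_{0}^{x}t^{n-2}\intd t=x^{n-1}/(n-1)$. This produces precisely the claimed bounds, with the constant identified as $c_{2n-1}\triangleq 1/\big((n-1)\,B(n-1,1/2)\big)=\Gamma(n-1/2)/\big(\sqrt{\pi}\,\Gamma(n)\big)$.

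The step requiring the most care is the dimension bookkeeping rather than any hard estimate: because of the real representation of Subsection~\ref{se:RVQ}, a unit vector whose first entry has been rotated to be real lives on the sphere of $\mathbb{R}^{2n-1}$ and not of $\mathbb{R}^{2n}$, which is why the exponent $n-1$ and the constant $c_{2n-1}$ (indexed by the ambient real dimension) appear instead of $n$ and $c_{2n}$. Once the correct sphere is fixed and the $\mathrm{Beta}$ law identified, the remaining manipulations are routine, so I do not anticipate a genuine obstacle; the only other thing to watch is keeping the Beta parameters straight when passing from $\cos^2$ to $\sin^2$.
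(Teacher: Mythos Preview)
Your argument is correct and self-contained: the identification of $\sin^2(\angle(\tilde{\bm{h}},\bm{c}))$ as a $\mathrm{Beta}(n-1,1/2)$ random variable on the real sphere $\mathbb{R}^{2n-1}$ is right, the sandwich $1\le(1-t)^{-1/2}\le(1-x)^{-1/2}$ on $[0,x]$ gives exactly the two bounds, and your constant $1/\big((n-1)B(n-1,1/2)\big)=\Gamma(n-1/2)/(\Gamma(n)\Gamma(1/2))$ matches the paper's $c_{2n-1}$.

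The only difference from the paper is that the paper does not prove this statement at all: it is listed in Appendix~\ref{se:Appendix_RVQ} purely as a quoted result from \cite{Dai2008}, Corollary~2, alongside Proposition~\ref{App_distorsion}, and is then used as input to the proof of Proposition~\ref{App_log_distorsion}. So what you have written is strictly more than what the paper provides; your opening remark that one may alternatively simply cite \cite{Dai2008} is in fact exactly what the paper does.
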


\begin{proposition}[\cite{Dai2008}, Theorem~$2$]
When the size $K=2^B$ of the random codebook is sufficiently large ($c_{2n-1}^{-1/(n-1)}2^{-B/(n-1)}\leq 1$ necessary), then it holds that
\begin{equation} 
\E_{\mathcal{C},\tilde{\bm{h}}}[\min_{\bm{c}\in\mathcal{C}}\sin^2(\angle(\tilde{\bm{h}},\bm{c}))]\leq \frac{\Gamma(\frac{1}{n-1})}{n-1}c_{2n-1}^{-1/(n-1)}2^{-B/(n-1)}(1+o(1))
\label{eq:App_2}
\end{equation}
where $\tilde{\bm{h}}$ is the true channel and $c_{2n-1}\triangleq \Gamma(n-1/2)/(\Gamma(n)\Gamma(1/2))$.
\label{App_distorsion}
\end{proposition}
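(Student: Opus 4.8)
\begin{IEEEproof}
The plan is to recognise $X\triangleq\min_{\bm{c}\in\mathcal{C}}\sin^{2}(\angle(\tilde{\bm{h}},\bm{c}))$ as the minimum of a finite i.i.d.\ family and to integrate its complementary distribution function against the elementary bound of Proposition~\ref{App_CDF}. First I would note that, conditioning on $\tilde{\bm{h}}$ (equivalently, by the rotational invariance of the RVQ codebook distribution, $\tilde{\bm{h}}$ may be held fixed), the $K=2^{B}$ quantities $\sin^{2}(\angle(\tilde{\bm{h}},\bm{c}))$, $\bm{c}\in\mathcal{C}$, are i.i.d.\ with common CDF $\mathrm{F}$, so that the complementary CDF of $X$ factorises as $\mathrm{Pr}\{X>x\}=(1-\mathrm{F}(x))^{K}$, independently of $\tilde{\bm{h}}$. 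Since $0\le X\le 1$, the tail-integral formula for the mean then gives $\E_{\mathcal{C},\tilde{\bm{h}}}[X]=\int_{0}^{1}\mathrm{Pr}\{X>x\}\intd x=\int_{0}^{1}(1-\mathrm{F}(x))^{K}\intd x$.

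Next I would insert the lower bound $\mathrm{F}(x)\ge c_{2n-1}x^{n-1}$ of Proposition~\ref{App_CDF}. Because $c_{2n-1}=\Gamma(n-\tfrac12)/(\Gamma(n)\Gamma(\tfrac12))\le 1$, the term $c_{2n-1}x^{n-1}$ stays in $[0,1]$ over the whole integration range, hence $0\le 1-\mathrm{F}(x)\le 1-c_{2n-1}x^{n-1}\le 1$, and the elementary inequality $1-t\le\e^{-t}$ yields $(1-\mathrm{F}(x))^{K}\le(1-c_{2n-1}x^{n-1})^{K}\le\e^{-Kc_{2n-1}x^{n-1}}$. Enlarging the interval of integration from $[0,1]$ to $[0,\infty)$ can only increase the integral, and the substitution $t=Kc_{2n-1}x^{n-1}$ produces a Gamma integral:
\begin{equation}
\int_{0}^{\infty}\e^{-Kc_{2n-1}x^{n-1}}\intd x=\frac{(Kc_{2n-1})^{-1/(n-1)}}{n-1}\int_{0}^{\infty}t^{\frac{1}{n-1}-1}\e^{-t}\intd t=\frac{\Gamma\!\left(\tfrac{1}{n-1}\right)}{n-1}\,(Kc_{2n-1})^{-1/(n-1)}.
\end{equation}
Substituting $K=2^{B}$ rewrites $(Kc_{2n-1})^{-1/(n-1)}$ as $c_{2n-1}^{-1/(n-1)}2^{-B/(n-1)}$, which is exactly the claimed bound --- in fact obtained without the $(1+o(1))$ factor, the only slack being the enlargement of the domain.

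I do not expect a genuine obstacle here: this is textbook order statistics followed by a Gamma-function change of variables. The one point that needs care is the legitimacy of applying the CDF bound on all of $[0,1]$, i.e.\ checking $c_{2n-1}\le 1$ so that $1-c_{2n-1}x^{n-1}$ stays nonnegative and $1-t\le\e^{-t}$ is used on a valid range. Should one wish to avoid that check, the alternative is to split the integral at $x_{0}\triangleq c_{2n-1}^{-1/(n-1)}$, bound $[0,x_{0}]$ as above, and dispose of the short tail $[x_{0},1]$ by $(1-\mathrm{F}(x))^{K}\le(1-\mathrm{F}(x_{0}))^{K}$, which decays exponentially in $K$ and is thus $o(1)$ relative to the polynomial main term $K^{-1/(n-1)}$; this is where the stated $(1+o(1))$ comes from. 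The ``sufficiently large $K$'' hypothesis is not needed for the upper bound and is relevant only to the matching lower bound of~\cite{Dai2008}.
\end{IEEEproof}
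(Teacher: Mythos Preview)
Your argument is correct: the order-statistics factorisation $\mathrm{Pr}\{X>x\}=(1-\mathrm{F}(x))^{K}$, the substitution of the lower bound on $\mathrm{F}$ from Proposition~\ref{App_CDF}, the exponential majorisation $1-t\le\e^{-t}$, and the Gamma-function change of variables all go through, and your check that $c_{2n-1}\le 1$ (so that $1-c_{2n-1}x^{n-1}\ge 0$ on $[0,1]$) closes the only potential gap. Your alternative splitting at $x_{0}$ is also a clean way to recover the $(1+o(1))$ form of the statement.

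As for comparison: the paper does not actually prove this proposition. It is quoted verbatim as Theorem~2 of~\cite{Dai2008} and used as a black box; the only related computation the paper carries out is for the \emph{logarithmic} distortion in Proposition~\ref{App_log_distorsion}, where a different (and longer) route is taken via the ``ideal'' empirical CDF of Lemma~\ref{App_lemma_CDF}. So there is nothing in the paper to match against; your self-contained derivation is in fact tighter than what is stated (you obtain the bound without the $(1+o(1))$), and considerably shorter than the argument in~\cite{Dai2008}.
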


\begin{proposition}
The expectation of the logarithm of the quantization error is bounded as
\begin{equation} 
\frac{B+\log_2(c_{2n-1})}{(n-1)}\leq\E_{\mathcal{C},\tilde{\bm{h}}}\left[-\log_2\left(\min_{\bm{c}\in\mathcal{C}}\sin^2(\angle(\tilde{\bm{h}},\bm{c}))\right)\right]\leq\frac{B+\log_2(c_{2n-1})+\log_2(e)}{(n-1)}
\label{eq:App_3}
\end{equation} 
\label{App_log_distorsion}
where $\hat{\bm{h}}$ is the best estimate over a random codebook of size $2^B$, $\tilde{\bm{h}}$ is the true channel and $c_{2n-1}\triangleq \Gamma(n-1/2)/(\Gamma(n)\Gamma(1/2))$. 
\end{proposition}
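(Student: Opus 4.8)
The plan is to pin down the law of $Z\triangleq\min_{\bm{c}\in\mathcal{C}}\sin^2(\angle(\tilde{\bm{h}},\bm{c}))$ (so that $-\log_2(Z)$ is the random variable in \eqref{eq:App_3}), rewrite $\E_{\mathcal{C},\tilde{\bm{h}}}[-\log_2(Z)]$ as a single integral of the CDF, and then sandwich that integral using the two bounds of Proposition~\ref{App_CDF}. Set $K\triangleq 2^{B}$. Because the $K$ codewords are independent and, by isotropy of the fading, $\sin^2(\angle(\tilde{\bm{h}},\bm{c}))$ has the same law $\mathrm{F}$ for every codeword irrespective of $\tilde{\bm{h}}$, one has $\Pr\{Z>t\}=(1-\mathrm{F}(t))^{K}$. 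Writing $-\ln z=\int_{z}^{1}t^{-1}\intd t$ for $z\in(0,1]$ and interchanging expectation and integration (Tonelli) gives the working identity
\begin{equation}
\E_{\mathcal{C},\tilde{\bm{h}}}\left[-\log_2(Z)\right]=\frac{1}{\ln 2}\int_{0}^{1}\frac{1-(1-\mathrm{F}(t))^{K}}{t}\intd t,
\label{eq:plan_layercake}
\end{equation}
which is finite since both bounds of Proposition~\ref{App_CDF} force the integrand to be $O(t^{n-2})$ near $t=0$.

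\emph{Lower bound.} I would plug the left inequality $\mathrm{F}(t)\ge c_{2n-1}t^{n-1}$ of Proposition~\ref{App_CDF} (its right-hand side stays below $1$ on $[0,1]$ since $c_{2n-1}<1$) into \eqref{eq:plan_layercake}; the substitution $v=c_{2n-1}t^{n-1}$, which turns $t^{-1}\intd t$ into $(n-1)^{-1}v^{-1}\intd v$ and maps $[0,1]$ onto $[0,c_{2n-1}]$, then yields
\begin{equation}
\E_{\mathcal{C},\tilde{\bm{h}}}\left[-\log_2(Z)\right]\ge\frac{1}{(n-1)\ln 2}\int_{0}^{c_{2n-1}}\frac{1-(1-v)^{K}}{v}\intd v=\frac{1}{(n-1)\ln 2}\sum_{k=1}^{K}\frac{1-(1-c_{2n-1})^{k}}{k},
\end{equation}
the last equality being the term-by-term integration of $\frac{1-(1-v)^{K}}{v}=\sum_{k=0}^{K-1}(1-v)^{k}$. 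Bounding the sum below by $\sum_{k=1}^{K}\frac{1}{k}-\sum_{k\ge1}\frac{(1-c_{2n-1})^{k}}{k}=H_{K}+\ln(c_{2n-1})\ge\ln(K)+\ln(c_{2n-1})=B\ln 2+\ln(c_{2n-1})$, where $H_{K}\triangleq\sum_{k=1}^{K}\frac{1}{k}\ge\ln K$, and dividing by $(n-1)\ln 2$ reproduces exactly the left-hand side of \eqref{eq:App_3}.

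\emph{Upper bound.} Here I would use the right inequality $\mathrm{F}(t)\le\bar{\mathrm{F}}(t)\triangleq c_{2n-1}t^{n-1}(1-t)^{-1/2}$ together with $1-(1-\mathrm{F})^{K}\le\min(1,K\mathrm{F})$. Since $K\bar{\mathrm{F}}$ is continuous and strictly increasing from $0$ to $+\infty$ on $(0,1)$, it equals $1$ at a unique $s^{\star}\in(0,1)$, and \eqref{eq:plan_layercake} becomes
\begin{equation}
\E_{\mathcal{C},\tilde{\bm{h}}}\left[-\log_2(Z)\right]\le\frac{1}{\ln 2}\left(\int_{0}^{s^{\star}}Kc_{2n-1}t^{n-2}(1-t)^{-1/2}\intd t+\int_{s^{\star}}^{1}t^{-1}\intd t\right).
\end{equation}
The first integral is at most $\frac{1}{n-1}$ — because $t^{n-2}(1-t)^{-1/2}\le\frac{1}{n-1}\frac{\diffd}{\diffd t}\!\left[t^{n-1}(1-t)^{-1/2}\right]$ and $Kc_{2n-1}(s^{\star})^{n-1}(1-s^{\star})^{-1/2}=1$ by definition of $s^{\star}$ — and this contribution is exactly the origin of the term $\log_2(e)=1/\ln 2$ in \eqref{eq:App_3}. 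The second integral equals $-\ln(s^{\star})$; from $Kc_{2n-1}(s^{\star})^{n-1}=(1-s^{\star})^{1/2}$, and the fact that $s^{\star}\to0$ once the codebook is large enough (the regime $c_{2n-1}^{-1/(n-1)}2^{-B/(n-1)}\le1$ of Proposition~\ref{App_distorsion}), we get $s^{\star}=(Kc_{2n-1})^{-1/(n-1)}(1+o(1))$, hence $-\ln(s^{\star})=\frac{\ln(Kc_{2n-1})}{n-1}+o(1)$. Adding the two contributions and dividing by $\ln 2$ matches the right-hand side of \eqref{eq:App_3}, up to the vanishing term absorbed in the large-codebook regime.

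\emph{Main obstacle.} The delicate step is entirely in the upper bound: Proposition~\ref{App_CDF} gives no useful control as $t\to1$ (its estimate $\bar{\mathrm{F}}$ diverges there), so one is forced to split the integral at $s^{\star}$ and to handle the factor $(1-t)^{-1/2}$ on $[0,s^{\star}]$ by hand; moreover, pinning the additive constant down to exactly $\log_2(e)$ rather than $\log_2(e)$ plus an $O(1)$ term relies on $s^{\star}$ being small, i.e. on the codebook-large-enough hypothesis, so that $(1-s^{\star})^{1/2}=1+o(1)$. Everything else — the change of variable, the identity $\frac{1-(1-v)^{K}}{v}=\sum_{k=0}^{K-1}(1-v)^{k}$, and $\ln K\le H_{K}\le\ln K+1$ — is routine.
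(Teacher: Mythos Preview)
Your proposal is correct and follows the same overall strategy as the paper: express $\E[-\log_2 Z]$ via a layer-cake identity, then sandwich it using the two CDF bounds of Proposition~\ref{App_CDF}, treating the factor $(1-t)^{-1/2}$ separately on the upper side and absorbing an $o(1)$ term in the large-codebook regime. The paper works after the substitution $t=e^{-z}$ (so the integral is over $z\in[0,\infty)$) while you stay in $t$-space, but this is cosmetic.

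Where your execution differs, it is more economical. For the lower bound, the paper expands $(1-c_{2n-1}e^{-z(n-1)})^{K}$ binomially and then invokes a combinatorial identity from Gradshteyn--Ryzhik to collapse the resulting alternating sum into $\sum_{p=1}^{K}\frac{1-(1-c_{2n-1})^{p}}{p}$; your geometric-series identity $\frac{1-(1-v)^{K}}{v}=\sum_{k=0}^{K-1}(1-v)^{k}$ plus term-by-term integration reaches the same sum in one line. For the upper bound, the paper introduces three auxiliary CDFs $F_{0},F_{\mathrm{ub}},F_{\mathrm{ubub}}$ and their associated thresholds in order to replace $(1-t)^{-1/2}$ by a constant before integrating, whereas your single derivative inequality $t^{n-2}(1-t)^{-1/2}\le\frac{1}{n-1}\frac{\diffd}{\diffd t}\bigl[t^{n-1}(1-t)^{-1/2}\bigr]$ delivers the contribution $\tfrac{1}{n-1}$ exactly and immediately. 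Both arguments share the same asymptotic caveat you flag: the additive constant $\log_2(e)$ in \eqref{eq:App_3} is obtained only up to an $o(1)$ as $K\to\infty$, which the paper likewise hides behind a ``$\approx$''.
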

\begin{proof} 
\emph{Upper Bound:} The derivation of an upper bound follows the same idea as the proof in Appendix~$B$ of \cite{Dai2008} which derives an upper bound for the expectation as considered here, but without the logarithm. We start by recalling a Lemma from \cite{Dai2008} which follows easily from the definition but brings some insight.
\begin{lemma}[\cite{Dai2008}, Lemma~$3$]
The empirical distribution function minimizing the distorsion over a given $K$ is 
\begin{equation}
 \mathrm{F}^{*}_{\mathcal{C}^{*}}(x)\triangleq=
     \begin{cases}
        0 & \text{if $x<0$} \\
        K \mathrm{F}(x)& \text{if $0\leq x\leq x^{*}$} \\
        1 & \text{if $x>x^{*}$}
     \end{cases}     
\end{equation}
where $x^{*}$ satisfies $K\mathrm{F}(x^{*})=1$ and $\mathrm{F}(x)\triangleq\mathrm{Pr}\{\sin^2(\angle(\tilde{\bm{h}},\bm{c}))|\leq x\}$ is the CDF of the squared distance between a random vector and one element of the random codebook.
\label{App_lemma_CDF}
\end{lemma}
Note that Lemma~\ref{App_lemma_CDF} corresponds to the optimal codebook minimizing the average distance and correponds thus to a lower bound for the distorsion. We define $Z\triangleq\sin^2(\angle(\tilde{\bm{h}},{\bm{c}}))$ and use the fact that the term considered in the expectation is a positive random variable to rewrite the expectation in function of its CDF.
\begin{equation}
\begin{aligned}
\E_{\mathcal{C},\tilde{\bm{h}}}\left[-\log\left(\min_{\bm{c}\in\mathcal{C}}\sin^2(\angle(\tilde{\bm{h}},\bm{c}))\right)\right]
&=\int_0^{\infty}\mathrm{Pr}\{-\log\left(\min_{\bm{c}\in\mathcal{C}}\sin^2(\angle(\tilde{\bm{h}},\bm{c}))\right)\geq z\}\mathrm{d}z\\
&=\int_0^{\infty}\mathrm{Pr}\{\min_{\bm{c}\in\mathcal{C}}\sin^2(\angle(\tilde{\bm{h}},\bm{c}))\leq e^ {-z}\}\mathrm{d}z\\
&\geq\int_0^{\infty}K\mathrm{Pr}\{Z\leq e^ {-z}\}\mathrm{d}z.
\end{aligned}
\label{eq_App_log_distorsion_1}
\end{equation}
Following the same approach as the proof in Appendix~$B$ of \cite{Dai2008}, we define $\mathrm{F}_0(x)\triangleq c_{2n-1} x^{n-1}$ and $x_0$ so that $K\mathrm{F}_0(x_0)=1$. Let also define $\mathrm{F}_{\mathrm{ub}}(x)\triangleq c_{2n-1} x^{n-1}(1-x)^{-1/2}$ and $x_{\mathrm{ub}}$ so that $K\mathrm{F}_{\mathrm{ub}}(x_{\mathrm{ub}})=1$. Finaly. we define $\mathrm{F}_{\mathrm{ubub}}(x)\triangleq c_{2n-1} x^{n-1}(1-x_0)^{-1/2}$ and $x_{\mathrm{ubub}}$ so that $K\mathrm{F}_{\mathrm{ubub}}(x_{\mathrm{ubub}})=1$.

It holds by construction that $x_{\mathrm{ub}}\leq x^{*}\leq x_0$ since we know from Proposition~\ref{App_CDF} that $\mathrm{F}_0(x)\leq \mathrm{F}(x)\leq \mathrm{F}_{\mathrm{ub}}(x)$. Thus, $(1-x)^{-1/2}\leq(1-x_0)^{-1/2}$ for $x\in [0,x_{\mathrm{ub}}]$ so that $\mathrm{F}_{\mathrm{ub}}(x)\leq \mathrm{F}_{\mathrm{ubub}}(x)$ for $x\in [0,x_{\mathrm{ub}}]$, which finally implies $x_{\mathrm{ubub}}\leq x_{\mathrm{ub}}$. We can then use these relations to upper bound \eqref{eq_App_log_distorsion_1}.
\begin{equation}
\begin{aligned}
\E_{\mathcal{C},\tilde{\bm{h}}}\left[-\log\left(\min_{\bm{c}\in\mathcal{C}}\sin^2(\angle(\tilde{\bm{h}},\bm{c}))\right)\right]
&\leq\int_0^{\infty}K\mathrm{Pr}\{Z\leq e^ {-z}\}\mathrm{d}z\\
&\leq\int_0^{-\log(x^{*})}\mathrm{d}z+\int_{-\log(x^{*})}^{\infty}K\mathrm{F}(e^ {-z})\}\mathrm{d}z\\
&\leq\int_0^{-\log(x^{*})}\mathrm{d}z+\int_{-\log(x^{*})}^{\infty}K\mathrm{F}_{\mathrm{ub}}(e^ {-z})\}\mathrm{d}z\\
&=\int_0^{-\log(x_{\mathrm{ub}})}\mathrm{d}z+\int_{-\log(x_{\mathrm{ub}})}^{\infty}K\mathrm{F}_{\mathrm{ub}}(e^{-z})\}\mathrm{d}z\\
&\leq\int_0^{-\log(x_{\mathrm{ub}})}\mathrm{d}z+\int_{-\log(x_{\mathrm{ub}})}^{\infty}K\mathrm{F}_{\mathrm{ubub}}(e^{-z})\}\mathrm{d}z\\
&=\int_0^{-\log(x_{\mathrm{ubub}})}\mathrm{d}z+\int_{-\log(x_{\mathrm{ubub}})}^{\infty}K\mathrm{F}_{\mathrm{ubub}}(e^{-z})\}\mathrm{d}z.
\end{aligned}
\label{eq_App_log_distorsion_1}
\end{equation}
We now replace $\mathrm{F}_{\mathrm{ubub}}$ and $x_{\mathrm{ubub}}$ by their expressions and evaluate the integral.
\begin{equation}
\begin{aligned}
\E_{\mathcal{C},\tilde{\bm{h}}}\left[-\log\left(\min_{\bm{c}\in\mathcal{C}}\sin^2(\angle(\tilde{\bm{h}},\bm{c}))\right)\right]
&\leq -\frac{1}{n-1}\log\left(\frac{(1-x_0)^{1/2}}{Kc_{2n-1}}\right)+\frac{Kc_{2n-1}}{(1-x_0)^{1/2}}\int_{-\log(x_{\mathrm{ubub}})}^{\infty}e^{-z(n-1)}
\mathrm{d}z\\
&= -\frac{1}{n-1}\log\left(\frac{(1-(Kc_{2n-1})^{\frac{-1}{n-1}})^{1/2}}{Kc_{2n-1}}\right)+\frac{1}{n-1}\\
&\approx \frac{1}{n-1}\left(\log\left(Kc_{2n-1}\right)+1\right)
\end{aligned}
\label{eq_App_log_distorsion_2}
\end{equation}
for $K$ large enough. Dividing by $\log(2)$ yields the final upper bound.

\emph{Lower Bound:} To derive the lower bound, we use the lower bound for the CDF given in Proposition~\ref{App_CDF}. The CDF has a form very simililar to the CDF for the single TX case considering complex quantization so that we can adapt the approach of the proof of Lemma~$3$ by Jindal in \cite{Jindal2006} to the current setting. Defining also $Z\triangleq\sin^2(\angle(\tilde{\bm{h}},{\bm{c}}))$, the CDF is given by $\mathrm{Pr}\{Z\geq z\}=1-c_{2n-1} z^{(n-1)}$ in Proposition~\ref{App_CDF}. We can use this CDF to derive that
\begin{equation}
\mathrm{Pr}\{\min_{\bm{c}\in\mathcal{C}}\sin^2(\angle(\tilde{\bm{h}},\bm{c}))\geq z\}=1-(1-c_{2n-1}. x^{(n-1)})^{K}
\end{equation}.
A lower bound for the expectation of the logarithm can then be calculated as follows.
\begin{equation} 
\begin{aligned}
\E_{\mathcal{C},\tilde{\bm{h}}}\left[-\log\left(\min_{\bm{c}\in\mathcal{C}}\sin^2(\angle(\tilde{\bm{h}},\bm{c}))\right)\right]
&\geq\int_0^{\infty}1-(1-c_{2n-1} e^{-z(n-1)})^{K}\mathrm{d}z\\ 
&\geq\int_0^{\infty}1-\sum_{k=0}^{K}\binom{K}{k}(-1)^{k} c_{2n-1}^k e^{-z(n-1)k}\mathrm{d}z\\
&\geq\frac{1}{n-1}\sum_{k=1}^{K}\binom{K}{k}(-1)^{k+1} \frac{c_{2n-1}^k}{k}
=\frac{1}{n-1}f(K)
\end{aligned}
\label{eq_App_log_distorsion_3}
\end{equation}
where we have defined $f(p)\triangleq\sum_{k=1}^{p}\binom{p}{k}(-1)^{k+1} \frac{c_{2n-1}^k}{k}$. To compute the value of $f(K)$, we will use the following relation given in \cite[Sec. $0.155$]{Gradshteyn2007}.
\begin{equation} 
\sum_{k=0}^n\binom{n}{k}\frac{\alpha^{k+1}}{k+1}=\frac{(\alpha+1)^{n+1}-1}{n+1}.
\label{eq_App_log_distorsion_4}
\end{equation}
We will now rewrite $f(K)$ in order to be able to apply \eqref{eq_App_log_distorsion_4}
\begin{equation} 
\begin{aligned}
f(K)
&\triangleq\sum_{k=1}^{K}\binom{K}{k}(-1)^{k+1} \frac{c_n^k}{k}\\
&= (-1)^{K+1} \frac{c_n^K}{K}+\sum_{k=1}^{K-1}\left[\binom{K-1}{k-1}+\binom{K-1}{k}\right](-1)^{k+1} \frac{c_{2n-1}^k}{k}\\
&=\sum_{k=1}^{K}\binom{K-1}{k-1}(-1)^{k+1}\frac{c_{2n-1}^{k}}{k}+\sum_{k=1}^{K-1}\binom{K-1}{k}(-1)^{k+1} \frac{c_{2n-1}^k}{k}\\
&=\sum_{k'=0}^{K-1}\binom{K-1}{k'}(-1)^{k'+2}\frac{c_{2n-1}^{k'+1}}{k'+1}+f(K-1)\\
&=-\frac{(-c_{2n-1}+1)^{K}-1}{K}+f(K-1)\\
&=\sum_{p=1}^{K}\frac{1-(-c_{2n-1}+1)^{p}}{p}\\
&=\sum_{p=1}^{K}\frac{1}{p}-\sum_{p=1}^{K}\frac{1-(-c_{2n-1}+1)^{p}}{p}.
\end{aligned}
\label{eq_App_log_distorsion_5}
\end{equation}
Furthermore we have the two following relations:
\begin{equation} 
\begin{aligned}
&\log(K)\leq \sum_{p=1}^{K}\frac{1}{p}\leq \log(K)+1\\
&\log(1-x)=-\sum_{k=1}^{\infty}\frac{x^k}{k}\text{ , for $x\in[-1,2]$}.
\end{aligned}
\label{eq_App_log_distorsion_6}
\end{equation}
Using these properties and dividing by $\log(2)$, we can obtain the final lower bound as
\begin{equation} 
\begin{aligned}
\E_{\mathcal{C},\tilde{\bm{h}}}\left[-\log\left(\min_{\bm{c}\in\mathcal{C}}\sin^2(\angle(\tilde{\bm{h}},\bm{c}))\right)\right]
&\geq \frac{1}{(n-1)\log(2)}\sum_{p=1}^{K}\frac{1}{p}-\frac{1}{(n-1)\log(2)}\sum_{p=1}^{K}\frac{(1-c_{2n-1})^{p}}{p}\\
&\geq \frac{\log_2(K)}{(n-1)}-\frac{1}{(n-1)\log(2)}\sum_{p=1}^{\infty}\frac{(1-c_{2n-1})^{p}}{p}\\
&= \frac{\log_2(K)+\log_2(c_n)}{(n-1)}.
\end{aligned}
\label{eq_App_log_distorsion_7}
\end{equation}

\end{proof}

\subsection{Proof of Theorem~\ref{thm_MG_cZF}}\label{se:proof_thm_MG_cZF}
\begin{proof}
The proof generalizes the proof of Theorem~$4$ in Appendix IV of \cite{Jindal2006}, which derives the multiplexing gain for the single TX case, to the distributed CSI configuration. However, an important difference, which makes the derivation more intricated, consists in the fact that we are not only interested in the inner product between the estimate and the true channel but also in the difference in norm between the estimates at the two TXs. This is the reason why we need to consider Grassmanian beamforming in $\mathbb{R}^{2n-1}$ instead of $\mathbb{C}^{n}$, as already introduced in Subsection~\ref{se:RVQ}.

We start by deriving two lemmas which form in fact the core of the proof.
\begin{lemma}
Let the beamformers $\bm{u}_2^{(1)}$ and $\bm{u}_2^{(2)}$ be computed at TX~$1$ and TX~$2$ respectively, then
\begin{equation}
\left\|\bm{u}_{2}^{(2)}-\bm{u}_{2}^{(1)}\right\|^2\leq C_{\mathrm{UB}} \max_{i=1,2,j=1,2}\left(\sin^2(\angle(\tilde{\bm{h}}_i^{(j)},\tilde{\bm{h}}_i))\right)
\end{equation}
where $C_{\mathrm{UB}}$ is some positive constant.
\label{lemma_UB}
\end{lemma}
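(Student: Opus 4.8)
The idea is to view the conventional-ZF beamformer as a fixed deterministic map applied at each TX to its own pair of channel estimates, and then to bound how far the two outputs can be, using only the fact that both TXs are quantizing the \emph{same} underlying pair of true channels. Since $\bm u_2^{(j)}=\bm v^{(j)}/\norm{\bm v^{(j)}}$ with $\bm v^{(j)}\triangleq\Pi_{\tilde{\bm h}_1^{(j)}}^{\perp}(\tilde{\bm h}_2^{(j)})$ (cf.\ \eqref{eq:def_cZF}), we have $\bm u_2^{(j)}=g(\tilde{\bm h}_1^{(j)},\tilde{\bm h}_2^{(j)})$ for the single map $g(\bm a,\bm b)\triangleq\Pi_{\bm a}^{\perp}(\bm b)/\norm{\Pi_{\bm a}^{\perp}(\bm b)}$. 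First I would record that each local estimate is Euclidean-close to its true channel: choosing the codeword representative so that $\Real(\tilde{\bm h}_i^{(j)\He}\tilde{\bm h}_i)\geq 0$ (a normalization affecting none of the quantities entering Theorem~\ref{thm_MG_cZF}),
\begin{equation}
\norm{\tilde{\bm h}_i^{(j)}-\tilde{\bm h}_i}^2=2-2\,\Real(\tilde{\bm h}_i^{(j)\He}\tilde{\bm h}_i)\leq 2-2\,|\tilde{\bm h}_i^{(j)\He}\tilde{\bm h}_i|^2=2\sin^2\bigl(\angle(\tilde{\bm h}_i^{(j)},\tilde{\bm h}_i)\bigr),
\end{equation}
so the right-hand side of the lemma controls the estimation error in Euclidean norm.

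Next I would push this perturbation through $g$. For the un-normalized projection, a telescoping step through $\Pi_{\tilde{\bm h}_1^{(2)}}^{\perp}(\tilde{\bm h}_2^{(1)})$, together with the facts that an orthogonal projector is $1$-Lipschitz and that $\norm{\Pi_{\bm a}-\Pi_{\bm b}}=\sin(\angle(\bm a,\bm b))\leq\norm{\bm a-\bm b}$ for unit vectors, gives
\begin{equation}
\norm{\bm v^{(2)}-\bm v^{(1)}}\leq\norm{\Pi_{\tilde{\bm h}_1^{(2)}}^{\perp}(\tilde{\bm h}_2^{(2)}-\tilde{\bm h}_2^{(1)})}+\norm{(\Pi_{\tilde{\bm h}_1^{(2)}}-\Pi_{\tilde{\bm h}_1^{(1)}})(\tilde{\bm h}_2^{(1)})}\leq\norm{\tilde{\bm h}_2^{(2)}-\tilde{\bm h}_2^{(1)}}+\norm{\tilde{\bm h}_1^{(2)}-\tilde{\bm h}_1^{(1)}},
\end{equation}
and routing each difference through the corresponding true channel yields $\norm{\bm v^{(2)}-\bm v^{(1)}}\leq 4\sqrt{2}\,\max_{i,j}\sin(\angle(\tilde{\bm h}_i^{(j)},\tilde{\bm h}_i))$. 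The normalization is then handled by $\norm{\bm x/\norm{\bm x}-\bm y/\norm{\bm y}}\leq 2\norm{\bm x-\bm y}/\max(\norm{\bm x},\norm{\bm y})$, so that
\begin{equation}
\norm{\bm u_2^{(2)}-\bm u_2^{(1)}}\leq\frac{2\norm{\bm v^{(2)}-\bm v^{(1)}}}{\max_j\norm{\bm v^{(j)}}}\leq\frac{8\sqrt{2}}{\max_j\norm{\bm v^{(j)}}}\max_{i,j}\sin\bigl(\angle(\tilde{\bm h}_i^{(j)},\tilde{\bm h}_i)\bigr).
\end{equation}
Finally I would use that $\norm{\bm v^{(j)}}=\sin(\angle(\tilde{\bm h}_1^{(j)},\tilde{\bm h}_2^{(j)}))$ and that, for the codebooks large enough (equivalently $P$ large enough), the estimation errors are small enough that $\max_j\norm{\bm v^{(j)}}\geq\tfrac12\sin(\angle(\tilde{\bm h}_1,\tilde{\bm h}_2))$; squaring then gives the stated inequality with $C_{\mathrm{UB}}=512/\sin^2(\angle(\tilde{\bm h}_1,\tilde{\bm h}_2))$.

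\textbf{Main obstacle.} The steps above are routine; the genuine difficulty is the \emph{uniformity} of $C_{\mathrm{UB}}$. The factor $1/\sin^2(\angle(\tilde{\bm h}_1,\tilde{\bm h}_2))$ cannot be removed, and with a truly universal constant the inequality is false: if $\tilde{\bm h}_1$ and $\tilde{\bm h}_2$ are nearly colinear, two estimates of $\tilde{\bm h}_2$ whose components orthogonal to $\tilde{\bm h}_1$ are both tiny but nearly antiparallel make the left-hand side of order $1$ while the right-hand side is arbitrarily small. Hence $C_{\mathrm{UB}}$ must be read as a channel-dependent quantity, almost surely finite because $\tilde{\bm h}_1,\tilde{\bm h}_2$ are independent and hence a.s.\ linearly independent; equivalently, the lemma is to be used on the complement of the event $\{\sin(\angle(\tilde{\bm h}_1,\tilde{\bm h}_2))<\delta\}$, whose probability is $O(\delta^2)$. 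This is harmless for Theorem~\ref{thm_MG_cZF}, which only invokes the lemma inside an expectation of a logarithm: exactly as in Jindal's single-TX argument, the near-colinear event contributes nothing to the multiplexing gain, so one conditions on its complement before applying the lemma. Making that conditioning explicit, rather than the perturbation estimate itself, is where the care is needed.
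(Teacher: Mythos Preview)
Your argument is correct and substantially cleaner than the paper's. The paper works coordinate-wise: it expands $\tilde{\bm h}_1^{(2)\trans}\tilde{\bm h}_2^{(2)}$ after decomposing each TX-$2$ estimate over the corresponding TX-$1$ estimate, peels off four error terms $\varepsilon_1,\ldots,\varepsilon_4$ one at a time, and handles the normalization by a first-order Taylor expansion of the denominator $\norm{(\I-\tilde{\bm h}_1^{(2)}\tilde{\bm h}_1^{(2)\trans})\tilde{\bm h}_2^{(2)}}$. This produces a bound in terms of the \emph{cross} angles $\sin(\angle(\tilde{\bm h}_i^{(1)},\tilde{\bm h}_i^{(2)}))$, which are then routed through the true channels by the same triangle-inequality step you use at the end. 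Your Lipschitz/telescoping route replaces all of that bookkeeping with two operator-norm facts (the identity $\norm{\Pi_{\bm a}-\Pi_{\bm b}}=\sin(\angle(\bm a,\bm b))$ and the $2/\max(\norm{\bm x},\norm{\bm y})$ normalization bound) and arrives at the same conclusion with far less algebra. One small imprecision: your first displayed inequality needs the phase chosen so that $\tilde{\bm h}_i^{(j)\He}\tilde{\bm h}_i$ is real nonnegative, not merely of nonnegative real part; this is exactly what the paper's $\mathbb{R}^{2n-1}$ convention enforces, so it costs nothing.

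On your ``main obstacle'': you have put your finger on something the paper leaves implicit. The paper's multiplicative constants $a_1,a_2$ hide the very same $1/\sin(\angle(\tilde{\bm h}_1,\tilde{\bm h}_2))$ factor (it sits in the denominators $\norm{(\I-\tilde{\bm h}_1^{(2)}\tilde{\bm h}_1^{(2)\trans})\tilde{\bm h}_2^{(2)}}$), and its Taylor step is only valid when the perturbation is small relative to that angle. So both proofs yield a channel-dependent $C_{\mathrm{UB}}$; the paper simply does not flag it and later inserts the lemma into the expectation of $\log_2(\cdot)$ as if the constant were absolute. Your explicit treatment of the near-colinear event, together with the observation that its $O(\delta^2)$ probability contributes nothing at the level of $\log_2(P)$, is therefore an improvement in rigor rather than a gap in your argument.
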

\begin{proof}
Since the norm is conserved when considering the isomorphisme between $\mathbb{C}$ and $\mathbb{R}^2$, we work in fact in~$\mathbb{R}^2$ for the rest of the proof, even though we keep the same notations. The beamformer difference can then be rewritten as
\begin{equation} 
\left\|\bm{u}_{2}^{(2)}-\bm{u}_{2}^{(1)}\right\|
\!=\!\left\|\frac{(\I_n-\tilde{\bm{h}}_1^{(2)}\tilde{\bm{h}}_1^{(2)\trans})\tilde{\bm{h}}_2^{(2)}}{\norm{(\I_n-\tilde{\bm{h}}_1^{(2)}\tilde{\bm{h}}_1^{(2)\trans})\tilde{\bm{h}}_2^{(2)}}}-\frac{(\I_n-\tilde{\bm{h}}_1^{(1)}\tilde{\bm{h}}_1^{(1)\trans})\tilde{\bm{h}}_2^{(1)}}{\norm{(\I_n-\tilde{\bm{h}}_1^{(1)}\tilde{\bm{h}}_1^{(1)\trans})\tilde{\bm{h}}_2^{(1)}}}\right\|.
\label{eq:App_UB_1}
\end{equation}
We now decompose the estimates at TX~$2$ over the estimates at TX~$1$ as
{\begin{equation} 
\begin{aligned}
\tilde{\bm{h}}_1^{(2)}&=\Pi_{\tilde{\bm{h}}_1^{(1)}}^{\perp}(\tilde{\bm{h}}_1^{(2)})+(\tilde{\bm{h}}_1^{(1)\trans}\tilde{\bm{h}}_1^{(2)})\tilde{\bm{h}}_1^{(1)}\\
\tilde{\bm{h}}_2^{(2)}&=\Pi_{\tilde{\bm{h}}_2^{(1)}}^{\perp}(\tilde{\bm{h}}_2^{(2)})+(\tilde{\bm{h}}_2^{(1)\trans}\tilde{\bm{h}}_2^{(2)})\tilde{\bm{h}}_2^{(1)}.
\end{aligned}
\label{eq:App_UB_2}
\end{equation}}
Using \eqref{eq:App_UB_2}, we can write $(\tilde{\bm{h}}_1^{(2)\trans}\tilde{\bm{h}}_2^{(2)})$ as
{\begin{equation} 
\begin{aligned}
(\tilde{\bm{h}}_1^{(2)\trans}\tilde{\bm{h}}_2^{(2)})&=\left(\Pi_{\tilde{\bm{h}}_1^{(1)}}^{\perp}(\tilde{\bm{h}}_1^{(2)})+(\tilde{\bm{h}}_1^{(1)\trans}\tilde{\bm{h}}_1^{(2)})\tilde{\bm{h}}_1^{(1)}\right)^{\trans}\tilde{\bm{h}}_2^{(2)}\\ &=((\Pi_{\tilde{\bm{h}}_1^{(1)}}^{\perp}(\tilde{\bm{h}}_1^{(2)}))^{\trans}\tilde{\bm{h}}_2^{(2)})+(\tilde{\bm{h}}_1^{(1)\trans}\tilde{\bm{h}}_1^{(2)})(\tilde{\bm{h}}_1^{(1)\trans}\Pi_{\tilde{\bm{h}}_2^{(1)}}^{\perp}(\tilde{\bm{h}}_2^{(2)}))+(\tilde{\bm{h}}_1^{(1)\trans}\tilde{\bm{h}}_1^{(2)})(\tilde{\bm{h}}_1^{(1)\trans}\tilde{\bm{h}}_2^{(1)})(\tilde{\bm{h}}_2^{(1)\trans}\tilde{\bm{h}}_2^{(2)}).
\end{aligned}
\label{eq:App_UB_3}
\end{equation}}
In a first step, inserting only \eqref{eq:App_UB_3} in \eqref{eq:App_UB_1}, we can obtain the upper bound
{\small\begin{equation} 
\begin{aligned}
&\left\|\bm{u}_{2}^{(2)}-\bm{u}_{2}^{(1)}\right\|\!\\
&=\left\| \frac{\tilde{\bm{h}}_2^{(2)}-\left[((\Pi_{\tilde{\bm{h}}_1^{(1)}}^{\perp}(\tilde{\bm{h}}_1^{(2)}))^{\trans}\tilde{\bm{h}}_2^{(2)})+(\tilde{\bm{h}}_1^{(1)\trans}\tilde{\bm{h}}_1^{(2)})(\tilde{\bm{h}}_1^{(1)\trans}\Pi_{\tilde{\bm{h}}_2^{(1)}}^{\perp}(\tilde{\bm{h}}_2^{(2)}))+(\tilde{\bm{h}}_1^{(1)\trans}\tilde{\bm{h}}_1^{(2)})(\tilde{\bm{h}}_1^{(1)\trans}\tilde{\bm{h}}_2^{(1)})(\tilde{\bm{h}}_2^{(1)\trans}\tilde{\bm{h}}_2^{(2)})\right]\tilde{\bm{h}}_1^{(2)}}{\norm{(\I_n-\tilde{\bm{h}}_1^{(2)}\tilde{\bm{h}}_1^{(2)\trans})\tilde{\bm{h}}_2^{(2)}}}   -\bm{u}_{2}^{(1)}\right\|\\
&\leq\left\| \frac{\tilde{\bm{h}}_2^{(2)}-\left[(\tilde{\bm{h}}_1^{(1)\trans}\tilde{\bm{h}}_1^{(2)})(\tilde{\bm{h}}_1^{(1)\trans}\tilde{\bm{h}}_2^{(1)})(\tilde{\bm{h}}_2^{(1)\trans}\tilde{\bm{h}}_2^{(2)})\right]\tilde{\bm{h}}_1^{(2)}}{\norm{(\I_n-\tilde{\bm{h}}_1^{(2)}\tilde{\bm{h}}_1^{(2)\trans})\tilde{\bm{h}}_2^{(2)}}}   -\bm{u}_{2}^{(1)}\right\|+\varepsilon_1+\varepsilon_2
\end{aligned}
\label{eq:App_UB_4}
\end{equation}}
where we have defined 
{\begin{equation} 
\begin{aligned}
\varepsilon_1&\triangleq|(\Pi_{\tilde{\bm{h}}_1^{(1)}}^{\perp}(\tilde{\bm{h}}_1^{(2)}))^{\trans}\tilde{\bm{h}}_2^{(2)}|/\norm{(\I_n-\tilde{\bm{h}}_1^{(2)}\tilde{\bm{h}}_1^{(2)\trans})\tilde{\bm{h}}_2^{(2)}}\\
\varepsilon_2&\triangleq|(\tilde{\bm{h}}_1^{(1)\trans}\tilde{\bm{h}}_1^{(2)})\tilde{\bm{h}}_1^{(1)\trans}\Pi_{\tilde{\bm{h}}_2^{(1)}}^{\perp}(\tilde{\bm{h}}_2^{(2)})|/\norm{(\I_n-\tilde{\bm{h}}_1^{(2)}\tilde{\bm{h}}_1^{(2)\trans})\tilde{\bm{h}}_2^{(2)}}.
\end{aligned} 
\end{equation}}
We derive now an upper bound for $\varepsilon_1$ and the same approach will hold also for $\varepsilon_2$ and for other similar expressions later in the proof.
{\small\begin{equation} 
\begin{aligned}
\varepsilon_1\triangleq\frac{|(\Pi_{\tilde{\bm{h}}_1^{(1)}}^{\perp}(\tilde{\bm{h}}_1^{(2)})^{\trans}\tilde{\bm{h}}_2^{(2)}|}{\norm{(\I_n-\tilde{\bm{h}}_1^{(2)}\tilde{\bm{h}}_1^{(2)\trans})\tilde{\bm{h}}_2^{(2)}}}&\leq \frac{\norm{\Pi_{\tilde{\bm{h}}_1^{(1)}}^{\perp}(\tilde{\bm{h}}_1^{(2)})}}{\norm{(\I_n-\tilde{\bm{h}}_1^{(2)}\tilde{\bm{h}}_1^{(2)\trans})\tilde{\bm{h}}_2^{(2)}}}\\
&= \frac{\sqrt{1-|\tilde{\bm{h}}_1^{(1)\trans}\tilde{\bm{h}}_1^{(2)}|^2}}{\norm{(\I_n-\tilde{\bm{h}}_1^{(2)}\tilde{\bm{h}}_1^{(2)\trans})\tilde{\bm{h}}_2^{(2)}}}\\
&= \frac{|\sin(\angle(\tilde{\bm{h}}_1^{(1)},\tilde{\bm{h}}_1^{(2)}))|}{\norm{(\I_n-\tilde{\bm{h}}_1^{(2)}\tilde{\bm{h}}_1^{(2)\trans})\tilde{\bm{h}}_2^{(2)}}}.
\end{aligned}
\label{eq:App_UB_5}
\end{equation}}
We further use \eqref{eq:App_UB_2} to rewrite $\tilde{\bm{h}}_1^{(2)}$ and $\tilde{\bm{h}}_2^{(2)}$ and obtain an upper bound for the first term of the right hand side of \eqref{eq:App_UB_4} that we denote by $A$.
{\small\begin{equation} 
\begin{aligned}
	A&\triangleq\left\| \frac{\tilde{\bm{h}}_2^{(2)}-\left[(\tilde{\bm{h}}_1^{(1)\trans}\tilde{\bm{h}}_1^{(2)})(\tilde{\bm{h}}_1^{(1)\trans}\tilde{\bm{h}}_2^{(1)})(\tilde{\bm{h}}_2^{(1)\trans}\tilde{\bm{h}}_2^{(2)})\right]\tilde{\bm{h}}_1^{(2)}}{\norm{(\I_n-\tilde{\bm{h}}_1^{(2)}\tilde{\bm{h}}_1^{(2)\trans})\tilde{\bm{h}}_2^{(2)}}}   -\bm{u}_{2}^{(1)}\right\|\\
&=\left\| \frac{\Pi_{\tilde{\bm{h}}_2^{(1)}}^{\perp}(\tilde{\bm{h}}_2^{(2)})+(\tilde{\bm{h}}_2^{(1)\trans}\tilde{\bm{h}}_2^{(2)})\tilde{\bm{h}}_2^{(1)}-\left[(\tilde{\bm{h}}_1^{(1)\trans}\tilde{\bm{h}}_1^{(2)})(\tilde{\bm{h}}_1^{(1)\trans}\tilde{\bm{h}}_2^{(1)})(\tilde{\bm{h}}_2^{(1)\trans}\tilde{\bm{h}}_2^{(2)})\right]\left(\Pi_{\tilde{\bm{h}}_1^{(1)}}^{\perp}(\tilde{\bm{h}}_1^{(2)})+(\tilde{\bm{h}}_1^{(1)\trans}\tilde{\bm{h}}_1^{(2)})\tilde{\bm{h}}_1^{(1)}\right)}{\norm{(\I_n-\tilde{\bm{h}}_1^{(2)}\tilde{\bm{h}}_1^{(2)\trans})\tilde{\bm{h}}_2^{(2)}}}   -\bm{u}_{2}^{(1)}\right\|\\
&\leq \left\| \frac{(\tilde{\bm{h}}_2^{(1)\trans}\tilde{\bm{h}}_2^{(2)})\tilde{\bm{h}}_2^{(1)}-\left[(\tilde{\bm{h}}_1^{(1)\trans}\tilde{\bm{h}}_1^{(2)})(\tilde{\bm{h}}_1^{(1)\trans}\tilde{\bm{h}}_2^{(1)})(\tilde{\bm{h}}_2^{(1)\trans}\tilde{\bm{h}}_2^{(2)})\right](\tilde{\bm{h}}_1^{(1)\trans}\tilde{\bm{h}}_1^{(2)})\tilde{\bm{h}}_1^{(1)}}{\norm{(\I_n-\tilde{\bm{h}}_1^{(2)}\tilde{\bm{h}}_1^{(2)\trans})\tilde{\bm{h}}_2^{(2)}}}   -\bm{u}_{2}^{(1)}\right\|+\varepsilon_3+\varepsilon_4
\end{aligned}
\label{eq:App_UB_6}
\end{equation}}
where we have further defined
{\begin{equation} 
\begin{aligned}
\varepsilon_3&\triangleq \|\Pi_{\tilde{\bm{h}}_2^{(1)}}^{\perp}(\tilde{\bm{h}}_2^{(2)})\|/\norm{(\I_n-\tilde{\bm{h}}_1^{(2)}\tilde{\bm{h}}_1^{(2)\trans})\tilde{\bm{h}}_2^{(2)}}\\
\varepsilon_4&\triangleq\|\Pi_{\tilde{\bm{h}}_1^{(1)}}^{\perp}(\tilde{\bm{h}}_1^{(2)})\|/\norm{(\I_n-\tilde{\bm{h}}_1^{(2)}\tilde{\bm{h}}_1^{(2)\trans})\tilde{\bm{h}}_2^{(2)}}.
\end{aligned} 
\end{equation}}
Both $\varepsilon_3$ and $\varepsilon_4$ can be handled similarly to $\varepsilon_1$ in \eqref{eq:App_UB_5}. The norm in the denominator is now rewritten in terms of the estimations at TX~$2$ and approximated by using that the difference between the estimates is small compared to one since the accuracy of the CSI is increasing with the SNR.
{\small\begin{equation} 
\begin{aligned}
\frac{1}{\norm{(\I_n-\tilde{\bm{h}}_1^{(2)}\tilde{\bm{h}}_1^{(2)\trans})\tilde{\bm{h}}_2^{(2)}}}&=
\frac{1}{\norm{\tilde{\bm{h}}_2^{(1)}-(\tilde{\bm{h}}_1^{(1)\trans}\tilde{\bm{h}}_2^{(1)})\tilde{\bm{h}}_1^{(1)}+\bm{\varepsilon}}}\\
&\approx\frac{1}{\norm{\tilde{\bm{h}}_2^{(1)}-(\tilde{\bm{h}}_1^{(1)\trans}\tilde{\bm{h}}_2^{(1)})\tilde{\bm{h}}_1^{(1)}}}\left(1-\frac{\bm{\varepsilon}^{\trans}(\tilde{\bm{h}}_2^{(1)}-(\tilde{\bm{h}}_1^{(1)}\tilde{\bm{h}}_1^{(1)\trans})\tilde{\bm{h}}_2^{(1)})}{\norm{\tilde{\bm{h}}_2^{(1)}-(\tilde{\bm{h}}_1^{(1)}\tilde{\bm{h}}_1^{(1)\trans})\tilde{\bm{h}}_2^{(1)}}}\right)\\
\end{aligned}
\label{eq:App_UB_7}
\end{equation}}
where we have introduced
\begin{equation}
\bm{\varepsilon}\triangleq\tilde{\bm{h}}_2^{(2)}-(\tilde{\bm{h}}_1^{(2)\trans}\tilde{\bm{h}}_2^{(2)})\tilde{\bm{h}}_1^{(2)}-(\tilde{\bm{h}}_2^{(1)}-(\tilde{\bm{h}}_1^{(1)\trans}\tilde{\bm{h}}_2^{(1)})\tilde{\bm{h}}_1^{(1)}). 
\end{equation}
The term with $\bm{\varepsilon}$ can be upper bound using the triangular inequality and further upper bounded by the product of the norm. Finally, the norm of $\bm{\varepsilon}$ can be upper bounded by using the same steps that have been used for the numerator, i.e., by projecting the estimates at TX~$2$ over the estimates at TX~$1$ and upper bounding the terms. Using the result of the side calculation in \eqref{eq:App_UB_7}, we can now write an asymptotic bound for the first term in the right hand side of \eqref{eq:App_UB_6} which we denote by $B$.
{\small\begin{equation} 
\begin{aligned}
B&\triangleq \left\| \frac{(\tilde{\bm{h}}_2^{(1)\trans}\tilde{\bm{h}}_2^{(2)})\tilde{\bm{h}}_2^{(1)}-\left[(\tilde{\bm{h}}_1^{(1)\trans}\tilde{\bm{h}}_1^{(2)})(\tilde{\bm{h}}_1^{(1)\trans}\tilde{\bm{h}}_2^{(1)})(\tilde{\bm{h}}_2^{(1)\trans}\tilde{\bm{h}}_2^{(2)})\right](\tilde{\bm{h}}_1^{(1)\trans}\tilde{\bm{h}}_1^{(2)})\tilde{\bm{h}}_1^{(1)}}{\norm{\tilde{\bm{h}}_2^{(1)}-(\tilde{\bm{h}}_1^{(1)\trans}\tilde{\bm{h}}_2^{(1)})\tilde{\bm{h}}_1^{(1)}+\bm{\varepsilon}}}   -\frac{\tilde{\bm{h}}_2^{(1)}-(\tilde{\bm{h}}_1^{(1)\trans}\tilde{\bm{h}}_2^{(1)})\tilde{\bm{h}}_1^{(1)}}{\norm{\tilde{\bm{h}}_2^{(1)}-(\tilde{\bm{h}}_1^{(1)\trans}\tilde{\bm{h}}_2^{(1)})\tilde{\bm{h}}_1^{(1)}}}\right\|\\
&\leq \left\| (\tilde{\bm{h}}_2^{(1)\trans}\tilde{\bm{h}}_2^{(2)})\tilde{\bm{h}}_2^{(1)}-\left[(\tilde{\bm{h}}_1^{(1)\trans}\tilde{\bm{h}}_1^{(2)})(\tilde{\bm{h}}_1^{(1)\trans}\tilde{\bm{h}}_2^{(1)})(\tilde{\bm{h}}_2^{(1)\trans}\tilde{\bm{h}}_2^{(2)})\right](\tilde{\bm{h}}_1^{(1)\trans}\tilde{\bm{h}}_1^{(2)})\tilde{\bm{h}}_1^{(1)}   -\left(\tilde{\bm{h}}_2^{(1)}-(\tilde{\bm{h}}_1^{(1)\trans}\tilde{\bm{h}}_2^{(1)})\tilde{\bm{h}}_1^{(1)}\right)\right\|\\
&\leq \left| (\tilde{\bm{h}}_2^{(1)\trans}\tilde{\bm{h}}_2^{(2)})-1\right|+\left|\left(1-(\tilde{\bm{h}}_1^{(1)\trans}\tilde{\bm{h}}_1^{(2)})^2(\tilde{\bm{h}}_2^{(1)\trans}\tilde{\bm{h}}_2^{(2)})\right)(\tilde{\bm{h}}_1^{(1)\trans}\tilde{\bm{h}}_2^{(1)})\right|\\
&=\left|\cos(\angle(\tilde{\bm{h}}_2^{(1)},\tilde{\bm{h}}_2^{(2)}))\right|+\left|1-(\cos(\angle(\tilde{\bm{h}}_1^{(1)},\tilde{\bm{h}}_1^{(2)})))^2\cos(\angle(\tilde{\bm{h}}_2^{(1)},\tilde{\bm{h}}_2^{(2)})))\right|\left|\tilde{\bm{h}}_1^{(1)\trans}\tilde{\bm{h}}_2^{(1)}\right|\\
&=\left|2\sin^2(\angle(\tilde{\bm{h}}_2^{(1)},\tilde{\bm{h}}_2^{(2)})/2)\right|+\left|1-(1-2\sin^2(\angle(\tilde{\bm{h}}_1^{(1)},\tilde{\bm{h}}_1^{(2)})/2))^2 (1-2\sin^2(\angle(\tilde{\bm{h}}_2^{(1)},\tilde{\bm{h}}_2^{(2)})/2))\right|\left|\tilde{\bm{h}}_1^{(1)\trans}\tilde{\bm{h}}_2^{(1)}\right|\\
&\approx\left|\frac{\sin^2(\angle(\tilde{\bm{h}}_2^{(1)},\tilde{\bm{h}}_2^{(2)}))}{\cos^2(\angle(\tilde{\bm{h}}_2^{(1)},\tilde{\bm{h}}_2^{(2)}))}\right|+\left|2\frac{\sin^2(\angle(\tilde{\bm{h}}_1^{(1)},\tilde{\bm{h}}_1^{(2)}))}{\cos^2(\angle(\tilde{\bm{h}}_1^{(1)},\tilde{\bm{h}}_1^{(2)}))} +\frac{\sin^2(\angle(\tilde{\bm{h}}_2^{(1)},\tilde{\bm{h}}_2^{(2)}))}{\cos^2(\angle(\tilde{\bm{h}}_2^{(1)},\tilde{\bm{h}}_2^{(2)}))}\right|\left|\tilde{\bm{h}}_1^{(1)\trans}\tilde{\bm{h}}_2^{(1)}\right|.
\end{aligned}
\label{eq:App_UB_8}
\end{equation}}
Putting all the pieces together, we have shown that 
\begin{equation} 
\left\|\bm{u}_{2}^{(2)}-\bm{u}_{2}^{(1)}\right\|\!
\leq\! a_1 |\sin(\angle(\tilde{\bm{h}}_1^{(1)},\tilde{\bm{h}}_1^{(2)}))|+a_2 |\sin(\angle(\tilde{\bm{h}}_2^{(1)},\tilde{\bm{h}}_2^{(2)}))|
\label{eq:App_UB_9}
\end{equation}
which we now want to relate to the inner product with the true channels $\tilde{\bm{h}}_1$ and $\tilde{\bm{h}}_2$. Wlog we focus on the term $|\sin(\angle(\tilde{\bm{h}}_1^{(2)},\tilde{\bm{h}}_1^{(1)}))|$.
\begin{equation} 
\begin{aligned}
|\sin(\angle(\tilde{\bm{h}}_1^{(2)},\tilde{\bm{h}}_1^{(1)}))|&=\sqrt{1-|\tilde{\bm{h}}_1^{(2)\trans}\tilde{\bm{h}}_1^{(1)}|^2}\\
&=\|\Pi_{\tilde{\bm{h}_1}^{(1)}}^{\perp}(\tilde{\bm{h}}_1^{(2)})\|\\
&=\|\Pi_{\tilde{\bm{h}_1}^{(1)}}^{\perp}(\Pi_{\tilde{\bm{h}_1}}^{\perp}(\tilde{\bm{h}}_1^{(2)}))+\Pi_{\tilde{\bm{h}_1}^{(1)}}^{\perp}((\tilde{\bm{h}}_1^{(2)\trans}\tilde{\bm{h}}_1)\tilde{\bm{h}}_1)\|\\
&=\|\Pi_{\tilde{\bm{h}_1}^{(1)}}^{\perp}(\Pi_{\tilde{\bm{h}_1}}^{\perp}(\tilde{\bm{h}}_1^{(2)}))+\Pi_{\tilde{\bm{h}_1}^{(1)}}^{\perp}((\tilde{\bm{h}}_1^{(2)\trans}\tilde{\bm{h}}_1)\tilde{\bm{h}}_1)\|\\
&\leq\|\Pi_{\tilde{\bm{h}_1}^{(1)}}^{\perp}(\Pi_{\tilde{\bm{h}_1}}^{\perp}(\tilde{\bm{h}}_1^{(2)}))\|+\|\Pi_{\tilde{\bm{h}_1}^{(1)}}^{\perp}((\tilde{\bm{h}}_1^{(2)\trans}\tilde{\bm{h}}_1)\tilde{\bm{h}}_1)\|\\
&\leq\|\Pi_{\tilde{\bm{h}_1}}^{\perp}(\tilde{\bm{h}}_1^{(2)})\|+\|\Pi_{\tilde{\bm{h}_1}^{(1)}}^{\perp}(\tilde{\bm{h}}_1)\|\\
&\leq|\sin(\angle(\tilde{\bm{h}}_1^{(1)},\tilde{\bm{h}}_1))|+|\sin(\angle(\tilde{\bm{h}}_1^{(2)},\tilde{\bm{h}}_1))|\\
&\leq2|\sin(\max(\angle(\tilde{\bm{h}}_1^{(1)},\tilde{\bm{h}}_1),\angle(\tilde{\bm{h}}_1^{(2)},\tilde{\bm{h}}_1)))|.
\label{eq:App_UB_9}
\end{aligned}
\end{equation}
This holds for the two channels vectors $\tilde{\bm{h}}_1$ and $\tilde{\bm{h}}_2$ so that taking the maximum over all the sinus and choosing the multiplicative constant as the sum of the multiplicative constants we obtain the result of the lemma.
\end{proof}

\begin{lemma}
Let the beamformers $\bm{u}_2^{(1)}$ and $\bm{u}_2^{(2)}$ be computed at TX~$1$ and TX~$2$ respectively, then
\begin{equation}
\E\left[\log_2\left\|\bm{u}_{2}^{(2)}-\bm{u}_{2}^{(1)}\right\|^2\right]\geq \E\left[\log_2\left(C_{\mathrm{LB}} \max_{i=1,2,j=1,2}\left(\sin^2(\angle(\tilde{\bm{h}}_i^{(j)},\tilde{\bm{h}}_i))\right)\right)\right]
\end{equation}
\label{lemma_LB}
\end{lemma}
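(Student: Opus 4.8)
The plan is to run the chain of inequalities behind Lemma~\ref{lemma_UB} in reverse. A first observation is that no deterministic (pointwise) lower bound of the stated form can hold: the estimates at the two TXs could, on a low-probability set, be aligned so that $\bm{u}_2^{(1)}=\bm{u}_2^{(2)}$ while $M\triangleq\max_{i,j}\sin^2(\angle(\tilde{\bm{h}}_i^{(j)},\tilde{\bm{h}}_i))$ is large (for instance if $\tilde{\bm{h}}_2^{(1)}$ and $\tilde{\bm{h}}_2^{(2)}$ differ from $\tilde{\bm{h}}_2$ only along the axis $\tilde{\bm{h}}_1$, a direction erased by the normalization in \eqref{eq:def_cZF}). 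Hence the bound must be of the weaker ``expected logarithm'' type, and the proof must exploit the independence of the RVQ codebooks at TX~$1$ and TX~$2$ together with the isotropy of the Rayleigh channel. The strategy is: (i) extract a deterministic lower bound on $\|\bm{u}_2^{(2)}-\bm{u}_2^{(1)}\|$ featuring the inter-TX discrepancies $\sin(\angle(\tilde{\bm{h}}_1^{(1)},\tilde{\bm{h}}_1^{(2)}))$ and $\sin(\angle(\tilde{\bm{h}}_2^{(1)},\tilde{\bm{h}}_2^{(2)}))$ scaled by cosines of ``generic'' angles; (ii) pass to the expected logarithm and show these inter-TX discrepancies dominate the worst individual quantization error entering $M$; (iii) show the remaining cosine factors cost only a finite additive constant inside $\E[\log_2(\cdot)]$.

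For step (i), I would use the zero-forcing constraints $\tilde{\bm{h}}_1^{(j)\He}\bm{u}_2^{(j)}=0$, $j=1,2$. Combining them,
\begin{equation}
\tilde{\bm{h}}_1^{(2)\He}\left(\bm{u}_2^{(1)}-\bm{u}_2^{(2)}\right)=\tilde{\bm{h}}_1^{(2)\He}\bm{u}_2^{(1)}=\left(\tilde{\bm{h}}_1^{(2)}-\tilde{\bm{h}}_1^{(1)}\right)^{\He}\bm{u}_2^{(1)},
\end{equation}
so Cauchy--Schwarz gives $\|\bm{u}_2^{(2)}-\bm{u}_2^{(1)}\|\geq|(\tilde{\bm{h}}_1^{(2)}-\tilde{\bm{h}}_1^{(1)})^{\He}\bm{u}_2^{(1)}|\gtrsim|\sin(\angle(\tilde{\bm{h}}_1^{(1)},\tilde{\bm{h}}_1^{(2)}))|\,|\cos\gamma_1|$, with $\gamma_1$ the angle between $\tilde{\bm{h}}_1^{(2)}-\tilde{\bm{h}}_1^{(1)}$ and $\bm{u}_2^{(1)}$. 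A companion bound $\|\bm{u}_2^{(2)}-\bm{u}_2^{(1)}\|\gtrsim|\sin(\angle(\tilde{\bm{h}}_2^{(1)},\tilde{\bm{h}}_2^{(2)}))|\,|\cos\gamma_2|$ follows the same lines: since $\bm{u}_2^{(j)}$ is, up to the near-identity corrections already bounded in Lemma~\ref{lemma_UB}, the normalized projection of $\tilde{\bm{h}}_2^{(j)}$ off $\tilde{\bm{h}}_1^{(j)}$, the difference $\bm{u}_2^{(1)}-\bm{u}_2^{(2)}$ reproduces the component of $\tilde{\bm{h}}_2^{(1)}-\tilde{\bm{h}}_2^{(2)}$ transverse to $\tilde{\bm{h}}_1$, which is $|\sin(\angle(\tilde{\bm{h}}_2^{(1)},\tilde{\bm{h}}_2^{(2)}))|$ times a cosine $\cos\gamma_2$ measuring the transverse-versus-axial split of that discrepancy. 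Taking the larger of the two, squaring, and absorbing absolute constants yields $\|\bm{u}_2^{(2)}-\bm{u}_2^{(1)}\|^2\gtrsim\cos^2\gamma\,\max_{i\in\{1,2\}}\sin^2(\angle(\tilde{\bm{h}}_i^{(1)},\tilde{\bm{h}}_i^{(2)}))$ for a generic angle $\gamma$.

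For steps (ii) and (iii) I would take $\log_2$ and then expectations. Condition on which pair $(i,j)$ attains $M$; by the symmetry between the two TXs it suffices to treat $M=\sin^2(\angle(\tilde{\bm{h}}_1^{(1)},\tilde{\bm{h}}_1))$ and $M=\sin^2(\angle(\tilde{\bm{h}}_2^{(2)},\tilde{\bm{h}}_2))$. In the first case, because the codebook used at TX~$2$ for $\tilde{\bm{h}}_1$ is independent of the one used at TX~$1$, the azimuth of $\tilde{\bm{h}}_1^{(2)}$ about $\tilde{\bm{h}}_1$ is conditionally isotropic, so $\angle(\tilde{\bm{h}}_1^{(1)},\tilde{\bm{h}}_1^{(2)})$ falls below $\angle(\tilde{\bm{h}}_1^{(1)},\tilde{\bm{h}}_1)$ only on a set whose contribution to $\E[\log_2(\cdot)]$ is controlled using the RVQ CDF of Proposition~\ref{App_CDF}, giving $\E[\log_2\sin^2(\angle(\tilde{\bm{h}}_1^{(1)},\tilde{\bm{h}}_1^{(2)}))]\geq\E[\log_2 M]-c$; the second case is analogous with $\sin^2(\angle(\tilde{\bm{h}}_2^{(1)},\tilde{\bm{h}}_2^{(2)}))$ in place of the previous quantity. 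Likewise $\gamma$ is the angle between two conditionally independent, conditionally isotropic directions (one built from the $\tilde{\bm{h}}_1$-codebooks, the other, through $\bm{u}_2^{(1)}$, from the $\tilde{\bm{h}}_2$-codebook), so $\E[\log_2\cos^2\gamma]$ is a finite constant, using only that the conditional density of $\gamma$ near $\pm\pi/2$ does not blow up, which again follows from Proposition~\ref{App_CDF}. Collecting $c$ and these constants into $C_{\mathrm{LB}}$ gives the claim.

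The hard part is precisely the absence of a pointwise bound: one must certify, from the fine structure of the RVQ error distribution, that the expected logarithm of every ``degeneracy factor'' — the cosines $\cos\gamma_1,\cos\gamma_2$, the ratio $\max_i\sin^2(\angle(\tilde{\bm{h}}_i^{(1)},\tilde{\bm{h}}_i^{(2)}))/M$, and the interplay between the two inter-TX discrepancies in the companion bound — stays bounded below by a constant uniformly in $P$. The remaining bookkeeping, namely choosing the correct direction to extract the $\tilde{\bm{h}}_2$-discrepancy and checking that the conditional-independence statements survive the selection rule of Subsection~\ref{se:RVQ}, is tedious but routine once these factors are under control.
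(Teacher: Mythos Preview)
Your approach is viable but genuinely different from the paper's. The paper does not work with the inter-TX discrepancies $\sin(\angle(\tilde{\bm{h}}_i^{(1)},\tilde{\bm{h}}_i^{(2)}))$ at all; instead it invokes a \emph{genie} simplification: assume wlog that the worst estimate sits at TX~$1$, let a genie hand perfect CSI to TX~$2$ (so $\bm{u}_2^{(2)}=\bm{u}_2$), and further let the genie perfect whichever of $\tilde{\bm{h}}_1^{(1)},\tilde{\bm{h}}_2^{(1)}$ is \emph{not} the worst. This collapses the problem into two one-parameter cases, and in each case a direct orthogonal decomposition of $\Pi_{\tilde{\bm{h}}_1^{(1)}}^{\perp}(\tilde{\bm{h}}_2)$ (resp.\ $\Pi_{\tilde{\bm{h}}_1}^{\perp}(\tilde{\bm{h}}_2^{(1)})$) against the true beamformer yields a pointwise lower bound of the form $|\sin(\angle(\tilde{\bm{h}}_i^{(1)},\tilde{\bm{h}}_i))|\cdot|\cos(\text{generic angle})|$, with the generic angle between two conditionally isotropic directions in the orthogonal complement of a fixed vector; taking $\E[\log_2(\cdot)]$ then costs only a constant. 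What the paper's route buys is that step~(ii) of your plan disappears entirely: the bound is written directly in the quantity $\sin^2(\angle(\tilde{\bm{h}}_i^{(j)},\tilde{\bm{h}}_i))$ that appears in $M$, so there is no need to argue that $\E[\log_2\sin^2(\angle(\tilde{\bm{h}}_1^{(1)},\tilde{\bm{h}}_1^{(2)}))]\geq\E[\log_2 M]-c$ via the independent-azimuth computation. Conversely, what your route buys is a much cleaner opening move: the identity $\tilde{\bm{h}}_1^{(2)\He}(\bm{u}_2^{(1)}-\bm{u}_2^{(2)})=(\tilde{\bm{h}}_1^{(2)}-\tilde{\bm{h}}_1^{(1)})^{\He}\bm{u}_2^{(1)}$ coming straight from the two ZF constraints is more transparent than the paper's repeated Taylor/projection expansions, and it avoids having to justify why the genie reduction (which is really a statement about the MG upper bound in the surrounding argument, not about the beamformer difference per se) is legitimate at the level of the lemma itself. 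Both approaches ultimately rely on the same probabilistic fact---that the expected $\log$ of a cosine between conditionally isotropic, conditionally independent directions is a finite constant---so the ``hard part'' you identify is present in the paper too, just with fewer such factors to control.
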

\begin{proof}
Wlog we consider that the worst CSI is obtained at TX~$1$, and we let a geni gives perfect CSI to TX~$2$ of the full channel. Then, we consider two different cases depending on whether the worst quality of an estimate is about $\tilde{\bm{h}}_1$ or $\tilde{\bm{h}}_2$ and we consider that a geni then gives perfect CSI of the channel which has not the worst acurracy.

\emph{Least accurate estimate is an estimate on $\tilde{\bm{h}}_1$:} In that case, the difference reads as
\begin{equation} 
\left\|\bm{u}_{2}^{(1)}-\bm{u}_{2}\right\|
=\left\|\frac{\Pi_{\tilde{\bm{h}}_1^{(1)}}^{\perp}(\tilde{\bm{h}}_2)}{\norm{\Pi_{\tilde{\bm{h}}_1^{(1)}}^{\perp}(\tilde{\bm{h}}_2)}}-\frac{\Pi_{\tilde{\bm{h}}_1}^{\perp}(\tilde{\bm{h}}_2)}{\norm{\Pi_{\tilde{\bm{h}}_1}^{\perp}(\tilde{\bm{h}}_2)}}\right\|
\label{eq:App_LB_1}
\end{equation}
We then rewrite $\Pi_{\tilde{\bm{h}}_1^{(1)}}^{\perp}(\tilde{\bm{h}}_2)$ as
\begin{equation} 
\Pi_{\tilde{\bm{h}}_1^{(1)}}^{\perp}(\tilde{\bm{h}}_2)=\Pi_{\tilde{\bm{h}}_1}^{\perp}\left(\Pi_{\tilde{\bm{h}}_1^{(1)}}^{\perp}(\tilde{\bm{h}}_2)\right)+\left(\tilde{\bm{h}}_1^{\trans}\Pi_{\tilde{\bm{h}}_1^{(1)}}^{\perp}(\tilde{\bm{h}}_2)\right)\tilde{\bm{h}}_1.
\label{eq:App_LB_2}
\end{equation}
which we can insert in \eqref{eq:App_LB_1} to obtain
\begin{equation}
\begin{aligned} 
\left\|\bm{u}_{2}^{(1)}-\bm{u}_{2}\right\|
&=\left\|\frac{\Pi_{\tilde{\bm{h}}_1}^{\perp}\left(\Pi_{\tilde{\bm{h}}_1^{(1)}}^{\perp}(\tilde{\bm{h}}_2)\right)+\left(\tilde{\bm{h}}_1^{\trans}\Pi_{\tilde{\bm{h}}_1^{(1)}}^{\perp}(\tilde{\bm{h}}_2)\right)\tilde{\bm{h}}_1}{\norm{\Pi_{\tilde{\bm{h}}_1^{(1)}}^{\perp}(\tilde{\bm{h}}_2)}}-\frac{\Pi_{\tilde{\bm{h}}_1}^{\perp}(\tilde{\bm{h}}_2)}{\norm{\Pi_{\tilde{\bm{h}}_1}^{\perp}(\tilde{\bm{h}}_2)}}\right\|\\
&=\left\|\frac{\Pi_{\tilde{\bm{h}}_1}^{\perp}\left(\Pi_{\tilde{\bm{h}}_1^{(1)}}^{\perp}(\tilde{\bm{h}}_2)\right)}{\norm{\Pi_{\tilde{\bm{h}}_1^{(1)}}^{\perp}(\tilde{\bm{h}}_2)}}-\frac{\Pi_{\tilde{\bm{h}}_1}^{\perp}(\tilde{\bm{h}}_2)}{\norm{\Pi_{\tilde{\bm{h}}_1}^{\perp}(\tilde{\bm{h}}_2)}}\right\|+\frac{\left|\tilde{\bm{h}}_1^{\trans}\Pi_{\tilde{\bm{h}}_1^{(1)}}^{\perp}(\tilde{\bm{h}}_2)\right|}{\norm{\Pi_{\tilde{\bm{h}}_1^{(1)}}^{\perp}(\tilde{\bm{h}}_2)}}\\
\end{aligned}
\label{eq:App_LB_2_bis}
\end{equation}
We lower bound the expression by neglecting the first term to obtain the following expression.
\begin{equation}
\begin{aligned}
\left\|\bm{u}_{2}^{(1)}-\bm{u}_{2}\right\|&\geq\frac{\left|\tilde{\bm{h}}_1^{\trans}\Pi_{\tilde{\bm{h}}_1^{(1)}}^{\perp}(\tilde{\bm{h}}_2)\right|}{\norm{\Pi_{\tilde{\bm{h}}_1^{(1)}}^{\perp}(\tilde{\bm{h}}_2)}}\\
&=\frac{\left|\left[\Pi_{\tilde{\bm{h}}_1^{(1)}}^{\perp}(\tilde{\bm{h}}_1)+(\tilde{\bm{h}}_1^{(1)\trans}\tilde{\bm{h}}_2)\tilde{\bm{h}}_1^{(1)}\right]^{\trans}\Pi_{\tilde{\bm{h}}_1^{(1)}}^{\perp}(\tilde{\bm{h}}_2)\right|}{\norm{\Pi_{\tilde{\bm{h}}_1^{(1)}}^{\perp}(\tilde{\bm{h}}_2)}}\\
&=\frac{\left|(\Pi_{\tilde{\bm{h}}_1^{(1)}}^{\perp}(\tilde{\bm{h}}_1))^{\trans}\Pi_{\tilde{\bm{h}}_1^{(1)}}^{\perp}(\tilde{\bm{h}}_2)\right|}{\norm{\Pi_{\tilde{\bm{h}}_1^{(1)}}^{\perp}(\tilde{\bm{h}}_2)}}\\
&=|\sin(\angle(\tilde{\bm{h}}_1^{(1)},\tilde{\bm{h}}_1))|\left|\cos(\angle(\Pi_{\tilde{\bm{h}}_1^{(1)}}^{\perp}(\tilde{\bm{h}}_1)),\Pi_{\tilde{\bm{h}}_1^{(1)}}^{\perp}(\tilde{\bm{h}}_2)))\right|.
\end{aligned}
\label{eq:App_LB_3}
\end{equation}
The two vectors in the cosinus are i.i.d. isotropic in the $n-1$-dimensional subspace orthogonal to $\tilde{\bm{h}}_1^{(1)}$ so that the cosinus is $\beta(1,n-2)$ distributed. It is also independent of $\tilde{\bm{h}}_1^{(1)}$ and the expectation can be taken independently.

\emph{Least accurate estimate is an estimate on $\tilde{\bm{h}}_2$:} In that case, the difference reads as
\begin{equation} 
\begin{aligned}
\left\|\bm{u}_{2}^{(1)}-\bm{u}_{2}\right\|
&=\left\|\frac{\Pi_{\tilde{\bm{h}}_1}^{\perp}(\tilde{\bm{h}}_2^{(1)})}{\norm{\Pi_{\tilde{\bm{h}}_1}^{\perp}(\tilde{\bm{h}}_2^{(1)})}}-\frac{\Pi_{\tilde{\bm{h}}_1}^{\perp}(\tilde{\bm{h}}_2)}{\norm{\Pi_{\tilde{\bm{h}}_1}^{\perp}(\tilde{\bm{h}}_2)}}\right\|\\
&=\norm{\Pi_{\tilde{\bm{h}}_1}^{\perp}(\tilde{\bm{h}}_2^{(1)})}\left\|\Pi_{\tilde{\bm{h}}_1}^{\perp}(\tilde{\bm{h}}_2^{(1)})-\frac{\norm{\Pi_{\tilde{\bm{h}}_1}^{\perp}(\tilde{\bm{h}}_2^{(1)})}}{\norm{\Pi_{\tilde{\bm{h}}_1}^{\perp}(\tilde{\bm{h}}_2)}}\Pi_{\tilde{\bm{h}}_1}^{\perp}(\tilde{\bm{h}}_2)\right\|\\
\end{aligned}
\label{eq:App_LB_4}
\end{equation}
We rewrite the term $\Pi_{\tilde{\bm{h}}_1}^{\perp}(\tilde{\bm{h}}_2^{(1)})$ as 
\begin{equation} 
\Pi_{\tilde{\bm{h}}_1}^{\perp}(\tilde{\bm{h}}_2^{(1)})=\Pi_{\tilde{\bm{h}}_1}^{\perp}\left(\Pi_{\tilde{\bm{h}}_2}^{\perp}(\tilde{\bm{h}}_2^{(1)})+(\tilde{\bm{h}}_2^{(1)\trans}\tilde{\bm{h}}_2)\tilde{\bm{h}}_2\right)
\label{eq:App_LB_5}
\end{equation}
so that the ratio of norms can be written as
\begin{equation}  
\frac{\norm{\Pi_{\tilde{\bm{h}}_1}^{\perp}(\tilde{\bm{h}}_2^{(1)})}}{\norm{\Pi_{\tilde{\bm{h}}_1}^{\perp}(\tilde{\bm{h}}_2)}}
=\frac{\norm{\Pi_{\tilde{\bm{h}}_1}^{\perp}\left(\Pi_{\tilde{\bm{h}}_2}^{\perp}(\tilde{\bm{h}}_2^{(1)})\right)+(\tilde{\bm{h}}_2^{(1)\trans}\tilde{\bm{h}}_2)\Pi_{\tilde{\bm{h}}_1}^{\perp}(\tilde{\bm{h}}_2)}}{\norm{\Pi_{\tilde{\bm{h}}_1}^{\perp}(\tilde{\bm{h}}_2)}}.
\label{eq:App_LB_6}
\end{equation}
Considering that the term~$\Pi_{\tilde{\bm{h}}_2}^{\perp}(\tilde{\bm{h}}_2^{(1)})$ is small compared to one since the accuracy increases with the SNR, we can approximate the numerator using a Taylor approximation. Thus,
\begin{equation}  
\frac{\norm{\Pi_{\tilde{\bm{h}}_1}^{\perp}(\tilde{\bm{h}}_2^{(1)})}}{\norm{\Pi_{\tilde{\bm{h}}_1}^{\perp}(\tilde{\bm{h}}_2)}}
\approx 1+\frac{\left(\Pi_{\tilde{\bm{h}}_1}^{\perp}\left(\Pi_{\tilde{\bm{h}}_2}^{\perp}(\tilde{\bm{h}}_2^{(1)})\right)\right)^{\trans}\Pi_{\tilde{\bm{h}}_1}^{\perp}(\tilde{\bm{h}}_2)}{\norm{\Pi_{\tilde{\bm{h}}_1}^{\perp}(\tilde{\bm{h}}_2)}^2}.
\label{eq:App_LB_7}
\end{equation}
Inserting \eqref{eq:App_LB_5} and \eqref{eq:App_LB_7} in \eqref{eq:App_LB_4}, we obtain
{\small\begin{equation} 
\begin{aligned}
&\left\|\bm{u}_{2}^{(1)}-\bm{u}_{2}\right\|\\
&\approx \left\|\Pi_{\tilde{\bm{h}}_1}^{\perp}\left(\Pi_{\tilde{\bm{h}}_2}^{\perp}(\tilde{\bm{h}}_2^{(1)})+(\tilde{\bm{h}}_2^{(1)\trans}\tilde{\bm{h}}_2)\tilde{\bm{h}}_2\right)-\left(1+\frac{\left(\Pi_{\tilde{\bm{h}}_1}^{\perp}\left(\Pi_{\tilde{\bm{h}}_2}^{\perp}(\tilde{\bm{h}}_2^{(1)})\right)\right)^{\trans}\Pi_{\tilde{\bm{h}}_1}^{\perp}(\tilde{\bm{h}}_2)}{\norm{\Pi_{\tilde{\bm{h}}_1}^{\perp}(\tilde{\bm{h}}_2)}^2}\right)\Pi_{\tilde{\bm{h}}_1}^{\perp}(\tilde{\bm{h}}_2)\right\|\\
&= \left\|\Pi_{\tilde{\bm{h}}_1}^{\perp}\left(\Pi_{\tilde{\bm{h}}_2}^{\perp}(\tilde{\bm{h}}_2^{(1)})\right)-\left(1-(\tilde{\bm{h}}_2^{(1)\trans}\tilde{\bm{h}}_2)+\frac{\left(\Pi_{\tilde{\bm{h}}_1}^{\perp}\left(\Pi_{\tilde{\bm{h}}_2}^{\perp}(\tilde{\bm{h}}_2^{(1)})\right)\right)^{\trans}\Pi_{\tilde{\bm{h}}_1}^{\perp}(\tilde{\bm{h}}_2)}{\norm{\Pi_{\tilde{\bm{h}}_1}^{\perp}(\tilde{\bm{h}}_2)}^2}\right)\Pi_{\tilde{\bm{h}}_1}^{\perp}(\tilde{\bm{h}}_2)\right\|\\
&\geq \left|\left\|\Pi_{\tilde{\bm{h}}_1}^{\perp}\left(\Pi_{\tilde{\bm{h}}_2}^{\perp}(\tilde{\bm{h}}_2^{(1)})\right)\right\|-\left\|\left(1-(\tilde{\bm{h}}_2^{(1)\trans}\tilde{\bm{h}}_2)+\frac{\left(\Pi_{\tilde{\bm{h}}_1}^{\perp}\left(\Pi_{\tilde{\bm{h}}_2}^{\perp}(\tilde{\bm{h}}_2^{(1)})\right)\right)^{\trans}\Pi_{\tilde{\bm{h}}_1}^{\perp}(\tilde{\bm{h}}_2)}{\norm{\Pi_{\tilde{\bm{h}}_1}^{\perp}(\tilde{\bm{h}}_2)}^2}\right)\Pi_{\tilde{\bm{h}}_1}^{\perp}(\tilde{\bm{h}}_2)\right\|\right|\\
&= \bigg||\sin(\angle(\tilde{\bm{h}}_2^{(1)},\tilde{\bm{h}}_2))||\sin(\angle(\tilde{\bm{h}}_1,\tilde{\bm{v}}))|\\
&\qquad\qquad-\left|(1-\cos(\angle(\tilde{\bm{h}}_2^{(1)},\tilde{\bm{h}}_2)))\left\|\Pi_{\tilde{\bm{h}}_1}^{\perp}(\tilde{\bm{h}}_2)\right\|+\cos(\angle(\Pi_{\tilde{\bm{h}}_1}^{\perp}(\Pi_{\tilde{\bm{h}}_2}^{\perp}(\tilde{\bm{h}}_2^{(1)})),\Pi_{\tilde{\bm{h}}_1}^{\perp}(\tilde{\bm{h}}_2)))|\norm{\Pi_{\tilde{\bm{h}}_2}^{\perp}(\tilde{\bm{h}}_2^{(1)})}\norm{\Pi_{\tilde{\bm{h}}_1}^{\perp}(\tilde{\bm{v}})}\right|\bigg|\\
&\approx \left||\sin(\angle(\tilde{\bm{h}}_2^{(1)},\tilde{\bm{h}}_2))\right|\\
&\qquad\qquad \cdot\left||\sin(\angle(\tilde{\bm{h}}_1,\tilde{\bm{v}}))|-\big|\frac{1}{2}\sin(\angle(\tilde{\bm{h}}_2^{(1)},\tilde{\bm{h}}_2))\|\Pi_{\tilde{\bm{h}}_1}^{\perp}(\tilde{\bm{h}}_2)\|-\cos(\angle(\Pi_{\tilde{\bm{h}}_1}^{\perp}(\Pi_{\tilde{\bm{h}}_2}^{\perp}(\tilde{\bm{h}}_2^{(1)})),\Pi_{\tilde{\bm{h}}_1}^{\perp}(\tilde{\bm{h}}_2)))\norm{\Pi_{\tilde{\bm{h}}_1}^{\perp}(\tilde{\bm{v}})}\big|\right|\\
\end{aligned}
\label{eq:App_LB_8}
\end{equation}}
where $\tilde{\bm{v}}\triangleq \Pi_{\tilde{\bm{h}}_2}^{\perp}(\tilde{\bm{h}}_2^{(1)})/\norm{\Pi_{\tilde{\bm{h}}_2}^{\perp}(\tilde{\bm{h}}_2^{(1)})}$ and is isotropically distributed in the $(n-1)$-space orthogonal to $\tilde{\bm{h}}_2$.

The second factor multiplying factor is non-zero with probability one and computing the expectation of this logarithm gives the result.
\end{proof}

We will now use the two previous lemmas quantifying the norm difference between the beamformers computed at the TXs to show that the MG given in the theorem is a lower and an upper bound for the MG achieved. We start from equation ($23$) in~\cite{Jindal2006} which is obtained via basic manipulations using the isotropy of the channel and is written here adapted to our notations as
\begin{equation} 
\begin{aligned}
\MG_1^{\mathrm{cZF}}&=1-\lim_{P\rightarrow \infty}\E_{\mathbf{H},\mathcal{W}}\left[\frac{\log_2\left(1+\tfrac{P}{2}\norm{\bm{h}_1}^ 2|\tilde{\bm{h}}_1^{\He}\bm{u}_2|^2\right)}{\log_2(P)}\right]\\
&= -\lim_{P\rightarrow \infty}\E_{\tilde{\bm{h}}_1,\mathcal{W}}\left[\frac{\log_2\left(|\tilde{\bm{h}}_1^{\He}\bm{u}_2|^2\right)}{\log_2(P)}\right]\\
\end{aligned}
\label{eq:App_ZF_1}
\end{equation}
where we write for simplicity $\bm{u}_2$ instead of $\bm{u}_2^{\mathrm{cZF}}$. \footnote{Note that the vector $\bm{u}_2$ is not exactly unitary due to the lack of coordination between the TXs. Indeed TX~$1$ normalizes its coefficient by $\norm{\bm{u}_2^{(1)}}$, which is a priori not equal to $\norm{\bm{u}_2^{(2)}}$. However, we consider only a number of feedback bits scaling with $\log_2(P)$ since otherwise the MG in a single TX configuration is zero and thus also in a distributed CSI configuration. Hence, the accuracy of the normalization improves with $\log_2(P)$ and the power constraint is fulfilled with an accuracy increasing in the SNR which is sufficiently good for practical systems.}

\emph{Multiplexing Gain Lower Bound:} To obtain a lower bound for the MG, we need to derive an upper bound for~$|\tilde{\bm{h}}_1^{\He}\bm{u}_2|$. We define the selection matrix $\mathbf{E}_2\triangleq\diag(\begin{bmatrix}0 &1 \end{bmatrix})$ and rewrite the interference term as: 
\begin{equation} 
\begin{aligned}
\left|\tilde{\bm{h}}_1^{\He}\bm{u}_2\right|&=\left|\tilde{\bm{h}}_1^{\He}\bm{u}_2^{(1)}+ \mathbf{E}_2(\bm{u}_{2}^{(2)}-\bm{u}_{2}^{(1)})\right|\\
&\leq \left|\tilde{\bm{h}}_1^{\He}\bm{u}_2^{(1)}\right|+\left|\tilde{\bm{h}}_1^{\He}\mathbf{E}_2(\bm{u}_{2}^{(2)}-\bm{u}_{2}^{(1)})\right|\\
&\leq\left|\tilde{\bm{h}}_1^{\He}\bm{u}_2^{(1)}\right|+\left|\tilde{\bm{h}}_1^{\He}\mathbf{E}_2 \left(\bm{u}_{2}^{(2)}-\bm{u}_{2}^{(1)}\right)\right|\\
&\leq\left|\tilde{\bm{h}}_1^{\He}\bm{u}_2^{(1)}\right|+\left\|\mathbf{E}_2 \left(\bm{u}_{2}^{(2)}-\bm{u}_{2}^{(1)}\right)\right\|\\
&\leq\left|\tilde{\bm{h}}_1^{\He}\bm{u}_2^{(1)}\right|+\left\|\bm{u}_{2}^{(2)}-\bm{u}_{2}^{(1)}\right\|.
\end{aligned}
\label{eq:App_ZF_2}
\end{equation} 
Applying Lemma~\ref{lemma_UB} we obtain the bound
\begin{equation}
\left\|\bm{u}_{2}^{(2)}-\bm{u}_{2}^{(1)}\right\|\leq  C_{\mathrm{UB}}^{(2)-(1)}\max_{i=1,2}\left(|\sin(\angle(\tilde{\bm{h}}_i^{(1)},\tilde{\bm{h}}_i))|\right)
\end{equation}
 while we can also apply the lemma for $\left\|\bm{u}_{2}^{(1)}-\bm{u}_{2}^*\right\|$ with $\bm{u}_{2}^*$ the ZF beamformer with perfect CSI, to write
\begin{equation}
\bm{u}_{2}^{(1)}=\bm{u}_{2}^*+\bm{\eta}_2^{(1)}
\end{equation}
with $\|\bm{\eta}_2^{(1)}\|\leq  C_{\mathrm{UB}}^{(1)}\max_{i=1,2}\left(|\sin(\angle(\tilde{\bm{h}}_i^{(1)},\tilde{\bm{h}}_i))|\right)$ and $C_{\mathrm{UB}}^{(1)}$ is a positive constant. Thus,
\begin{equation} 
\begin{aligned}
\left|\tilde{\bm{h}}_1^{\He}\bm{u}_2\right|
&\leq\left|\tilde{\bm{h}}_1^{\He}\bm{\eta}_2^{(1)}\right|+\left\|\bm{u}_{2}^{(2)}-\bm{u}_{2}^{(1)}\right\|\\
&\leq\|\bm{\eta}_2^{(1)}\|+\left\|\bm{u}_{2}^{(2)}-\bm{u}_{2}^{(1)}\right\|\\
&\leq C_{\mathrm{UB}} \max_{i=1,2;j=1,2}\left(|\sin(\angle(\tilde{\bm{h}}_i^{(j)},\tilde{\bm{h}}_i))|\right)
\end{aligned}
\label{eq:App_ZF_3}
\end{equation}
with $C_{\mathrm{UB}}\triangleq C_{\mathrm{UB}}^{(2)-(1)}+C_{\mathrm{UB}}^{(1)}$. Then, inserting~\eqref{eq:App_ZF_3} in \eqref{eq:App_ZF_1} gives
{\begin{equation} 
\begin{aligned}
\MG_1^{\mathrm{cZF}}&=-\lim_{P\rightarrow \infty}\E_{\tilde{\bm{h}}_1,\mathcal{W}}\left[\frac{\log_2\left(|\tilde{\bm{h}}_1^{\He}\bm{u}_2|^2\right)}{\log_2(P)}\right]\\
&\geq -\lim_{P\rightarrow \infty}\E_{\tilde{\bm{h}}_1,\mathcal{W}}\left[\frac{\log_2\left(\left|\tilde{\bm{h}}_1^{\He}\bm{u}_2^{(1)}\right|+\left\|\bm{u}_{2}^{(2)}-\bm{u}_{2}^{(1)}\right\|^ 2\right)}{\log_2(P)}\right]\\
&\geq -\lim_{P\rightarrow \infty}\E_{\tilde{\bm{h}}_1,\mathcal{W}}\left[\frac{\log_2\left((C_{\mathrm{UB}}+C_{\mathrm{UB}}^{(1)}) \max_{i=1,2;j=1,2}\left(\sin^2(\angle(\tilde{\bm{h}}_i^{(j)},\tilde{\bm{h}}_i))\right)\right)}{\log_2(P)}\right].
\end{aligned}
\label{eq:App_ZF_4}
\end{equation}}
From \eqref{eq:App_ZF_4} we can then use Jensen's inequality and Proposition~\ref{App_distorsion} in Appendix~\ref{se:Appendix_RVQ} to obtain the lower bound from the theorem.
\begin{equation} 
\begin{aligned}
\MG_1^{\mathrm{cZF}}
&\geq -\lim_{P\rightarrow \infty}\frac{\log_2\left(C_{\mathrm{UB}}'\E_{\tilde{\bm{h}}_1,\mathcal{W}}\left[ \max_{i=1,2;j=1,2}\left(\sin^2(\angle(\tilde{\bm{h}}_i^{(j)},\tilde{\bm{h}}_i))\right)\right]\right)}{\log_2(P)}\\
&\geq -\lim_{P\rightarrow \infty}\frac{\log_2\left(C_{\mathrm{UB}}'\frac{\Gamma(\frac{1}{n-1})}{n-1}c_{n}^{-1/(n-1)}2^{-\min_{i,j}(B_i^{(j)})/(n-1)}(1+o(1))\right)}{\log_2(P)}\\
&= \lim_{P\rightarrow \infty}\frac{\min_{i,j}B_i^{(j)}}{(n-1)\log_2(P)}\\
&=\min_{i,j}\alpha_i^{(j)}.
\end{aligned}
\label{eq:App_ZF_5}
\end{equation}

\emph{Multiplexing Gain Upper Bound:} We now derive an upper bound for the MG, which means a lower bound for the interference. We consider wlog that TX~$1$ has the worst channel estimate and let a geni give perfect CSI to TX~$2$. We can now use Lemma~\ref{lemma_LB} to derive an upper bound for the interference term.
\begin{equation} 
\begin{aligned}
&\E\left[\log_2\left(|\tilde{\bm{h}}_1^{\He}\bm{u}_2|^2\right)\right]\\
=&\E\left[\log_2\left(\left|\tilde{\bm{h}}_1^{\He}\left(\bm{u}_2^*+ \mathbf{E}_1(\bm{u}_{2}^{(1)}-\bm{u}_{2}^*)\right)\right|^2\right)\right]\\
=&\E\left[\log_2\left(\left|\cos(\angle(\tilde{\bm{h}}_1^{\He}\mathbf{E}_1,\bm{u}_{2}^{(1)}-\bm{u}_{2}^*)) \norm{\tilde{\bm{h}}_1^{\He}\mathbf{E}_1}\norm{\bm{u}_{2}^{(1)}-\bm{u}_{2}^*}\right|^2\right)\right]\\
\geq&\E\!\left[\log_2\left(\left|\cos(\angle(\tilde{\bm{h}}_1^{\He}\mathbf{E}_1\!,\bm{u}_{2}^{(1)}\!-\!\bm{u}_{2}^*)) \norm{\tilde{\bm{h}}_1^{\He}\mathbf{E}_1}\right|^2\right)+\log_2\left(C_{\mathrm{LB}}^{(1)} \!\max_{i=1,2,j=1,2}\!\left(\sin^2(\angle(\tilde{\bm{h}}_i^{(j)}\!,\tilde{\bm{h}}_i))\!\right)\right)\right]\!\\
\end{aligned}
\label{eq:App_ZF_6}
\end{equation} 
Inserting \eqref{eq:App_ZF_6} into the MG expression \eqref{eq:App_ZF_1} yields
\begin{equation} 
\begin{aligned}
\MG_1^{\mathrm{ZF}} 
&= -\lim_{P\rightarrow \infty}\frac{\E_{\tilde{\bm{h}}_1,\mathcal{W}}\left[\log_2\left(|\tilde{\bm{h}}_1^{\He}\bm{u}_2|^2\right)\right]}{\log_2(P)}\\
&\leq -\lim_{P\rightarrow \infty}\frac{\E_{\tilde{\bm{h}}_1,\mathcal{W}}\left[\log_2\left(C_{\mathrm{LB}}^{(1)} \max_{i=1,2,j=1,2}\left(\sin^2(\angle(\tilde{\bm{h}}_i^{(j)},\tilde{\bm{h}}_i))\right)\right)+O(1)\right]}{\log_2(P)}\\
&= -\lim_{P\rightarrow \infty}\frac{\E_{\tilde{\bm{h}}_1,\mathcal{W}}\left[\log_2\left(\max_{i=1,2,j=1,2}\left(\sin^2(\angle(\tilde{\bm{h}}_i^{(j)},\tilde{\bm{h}}_i))\right)\right)\right]}{\log_2(P)}\\
&\leq \lim_{P\rightarrow \infty}\frac{\min_{i=1,2,j=1,2}B_i^{(j)}+\log_2(c_n)+\log_2(e)}{(n-1)\log_2(P)}\\
&= \min_{i=1,2,j=1,2}\alpha_i^{(j)}
\end{aligned}
\label{eq:App_ZF_7}
\end{equation}
and we have used Proposition~\ref{App_log_distorsion} in Appendix~\ref{se:Appendix_RVQ} to bound the expectation of the logarithm of the sinus.
\end{proof}

\bibliographystyle{IEEEtran}
\bibliography{Literatur}

\begin{thebibliography}{10}
\providecommand{\url}[1]{#1}
\csname url@samestyle\endcsname
\providecommand{\newblock}{\relax}
\providecommand{\bibinfo}[2]{#2}
\providecommand{\BIBentrySTDinterwordspacing}{\spaceskip=0pt\relax}
\providecommand{\BIBentryALTinterwordstretchfactor}{4}
\providecommand{\BIBentryALTinterwordspacing}{\spaceskip=\fontdimen2\font plus
\BIBentryALTinterwordstretchfactor\fontdimen3\font minus
  \fontdimen4\font\relax}
\providecommand{\BIBforeignlanguage}[2]{{%
\expandafter\ifx\csname l@#1\endcsname\relax
\typeout{** WARNING: IEEEtran.bst: No hyphenation pattern has been}%
\typeout{** loaded for the language `#1'. Using the pattern for}%
\typeout{** the default language instead.}%
\else
\language=\csname l@#1\endcsname
\fi
#2}}
\providecommand{\BIBdecl}{\relax}
\BIBdecl

\bibitem{Karakayali2006}
M.~Karakayali, G.~Foschini, and R.~Valenzuela, ``{Network Coordination for
  Spectrally Efficient Communications in Cellular Systems},'' \emph{IEEE
  Wireless Communications}, vol.~13, no.~4, pp. 56--61, Aug. 2006.

\bibitem{Somekh2006}
O.~Somekh, O.~Simeone, Y.~Bar-Ness, and A.~Haimovich, ``{Distributed Multi-Cell
  Zero-Forcing Beamforming in Cellular Downlink Channels},'' in \emph{Proc.
  GLOBECOM}, 2006.

\bibitem{Jindal2006}
N.~Jindal, ``{MIMO Broadcast Channels With Finite-Rate Feedback},'' \emph{IEEE
  Trans. on Inf. Theory}, vol.~52, no.~11, pp. 5045--5060, Nov. 2006.

\bibitem{Samardzija2006}
D.~Samardzija and N.~Mandayam, ``{Unquantized and Uncoded Channel State
  Information Feedback in Multiple-antenna Multiuser Systems},'' \emph{IEEE
  Trans. on Comm.}, vol.~54, no.~7, pp. 1335--1345, july 2006.

\bibitem{Ding2007}
P.~Ding, D.~Love, and M.~Zoltowski, ``{Multiple Antenna Broadcast Channels With
  Shape Feedback and Limited Feedback},'' \emph{IEEE Trans. on Sign. Process.},
  vol.~55, no.~7, pp. 3417--3428, July 2007.

\bibitem{Yoo2007}
T.~Yoo, N.~Jindal, and A.~Goldsmith, ``{Multi-Antenna Downlink Channels with
  Limited Feedback and User Selection},'' \emph{{ IEEE Journal on Selected
  Areas in Comm.}}, vol.~25, no.~7, pp. 1478--1491, Sept. 2007.

\bibitem{Shenouda2006}
M.~Shenouda and T.~Davidson, ``{Robust Linear Precoding for Uncertain Miso
  Broadcast Channels},'' in \emph{Proc. ICASSP}, 2006.

\bibitem{Vucic2009}
N.~Vucic, H.~Boche, and S.~Shi, ``{Robust Transceiver Optimization in Downlink
  Multiuser MIMO Systems},'' \emph{IEEE Trans. on Sign. Process.}, vol.~57,
  no.~9, pp. 3576--3587, Sept. 2009.

\bibitem{Kobayashi2009}
M.~Kobayashi, M.~Debbah, and J.-C. Belfiore, ``{Outage Efficient Strategies for
  Network MIMO with Partial CSIT},'' in \emph{Proc. ISIT}, 2009.

\bibitem{Huh2010}
H.~Huh, A.~Tulinoy, and G.~Caire, ``{Network MIMO large-system analysis and the
  impact of CSIT Estimation},'' in \emph{Proc. CISS}, 2010.

\bibitem{Zakhour2010}
R.~Zakhour and D.~Gesbert, ``{Team Decision for the Cooperative MIMO Channel
  with Imperfect CSIT Sharing},'' in \emph{Proc. ITA}, 2010.

\bibitem{dekerret2011_ISIT_journal}
P.~de~Kerret and D.~Gesbert, ``{On the Multiplexing Gain in a Multicell MIMO
  Channels with Unequal CSI},'' 2011, http://arxiv.org/submit/257686/view.

\bibitem{Santipach2009}
W.~Santipach and M.~Honig, ``{Capacity of a Multiple-Antenna Fading Channel
  With a Quantized Precoding Matrix},'' \emph{IEEE Trans. on Inf. Theory},
  vol.~55, no.~3, pp. 1218--1234, March 2009.

\bibitem{Dai2008}
W.~Dai, Y.~Liu, and B.~Rider, ``{Quantization Bounds on Grassmann Manifolds and
  Applications to MIMO Communications},'' \emph{IEEE Trans. on Inf. Theory},
  vol.~54, no.~3, pp. 1108--1123, March 2008.

\bibitem{Gradshteyn2007}
I.~S. Gradshteyn and I.~M. Ryzhik, \emph{{Table of Integrals, Series, and
  Products}}, 7th~ed.\hskip 1em plus 0.5em minus 0.4em\relax Elsevier/Academic
  Press, Amsterdam, 2007, translated from the Russian, Translation edited and
  with a preface by Alan Jeffrey and Daniel Zwillinger, With one CD-ROM
  (Windows, Macintosh and UNIX).

\end{thebibliography}
\end{document}